\numberwithin{equation}{section}
\theoremstyle{plain}
\newtheorem{theorem}{Theorem}[section]
\newtheorem{proposition}[theorem]{Proposition}
\newtheorem{lemma}[theorem]{Lemma}
\newtheorem{corollary}[theorem]{Corollary}
\newtheorem{example}[theorem]{Example}
\renewcommand{\P}{\mathbb{P}}
\newcommand{\E}{\mathbb{E}}
\newcommand{\R}{\mathbb{R}}
\newcommand{\F}{\mathcal{F}}
\title{High Frequency Market Making}
\date{\today}
\author{Ren\'e Carmona}
\author{Kevin Webster}
\keywords{Limit-order book, Transaction costs, Stochastic partial differential equations, Pontryagin maximum principle, Market making}
\begin{document}

\begin{abstract}
Since they were authorized by the U.S. Security and Exchange Commission in 1998, electronic exchanges have boomed, and by 2010 high frequency trading accounted for over 70\% of equity trades in the US. Such markets are thought to increase liquidity because of the presence of market makers, who are willing to trade as counterparties at any time, in exchange for a fee, the bid-ask spread. In this paper, we propose an equilibrium model showing how such market makers provide liquidity. The model relies on a codebook for client trades, the implied alpha. After solving the individual clients optimization problems and identifying their implied alphas, we frame the market maker stochastic optimization problem as a stochastic control problem with an infinite dimensional control variable. Assuming either identical time horizons for all the clients, or a stochastic partial differential equation model for their beliefs, we solve the market maker problem and derive tractable formulas for the optimal strategy and the resulting limit-order book dynamics.
\end{abstract}

\maketitle

\section{Introduction}

Electronic exchanges play an increasingly important role in financial markets and market microstructure became the key to understanding them. Here market microstructure is understood as \emph{the study of the trading mechanisms used for financial securities} \cite{Hasbrouck}. High frequency trading strategies depend very strongly on these mechanisms which in turn, vary from market to market. Three main themes were proposed to unite the common features of most of these markets \cite{Hasbrouck}. 
\begin{enumerate}
\item The first theme is the limit-order book, where agents can post trading intentions at prices above (ask) or below (bid) the mid-price, which is regarded as the current \emph{fair price} for the security. These posts, called \emph{limit-orders}, can then lead to executions if they are paired with a matching order. In this case, the initial agent, commonly known as the liquidity provider, gains a small amount of money, having sold at the ask or bought at the bid. It is said in that case that the liquidity taker has paid a \emph{liquidity fee} for the right to trade immediately.
\item The second theme is adverse selection due to asymmetric sources of information. It is based on the principle that liquidity takers choose which limit-orders are actually executed, and when. Moreover, since liquidity providers publicly announce their intentions by posting their limit-orders, and these two market rules create an \emph{information bias} in favor of liquidity takers that compensates for the \emph{liquidity fee}.
\item The last theme is statistical predictions. All agents on electronic markets attempt to predict prices on a short time scale. Such predictions are possible both because of the existence of private information and the actual market mechanisms that create short-range price correlations. While statistical predictions are used by all traders on electronic markets, the objectives of these agents vary widely. Some make their profits from these predictions, while others just reduce their transaction costs by \emph{trading smartly}.
\end{enumerate}

After the liquidity crisis of 2008, liquidity became the central object of interest for market microstructure. Liquidity is rarely  defined precisely, although intuitively, it should quantify how difficult it is  to engage in a trade. Commonly accepted measures of liquidity are the bid-ask spread\footnote{The bid-ask spread is defined as the price difference between the lowest ask and the highest bid in the limit-order book.} and the volume present in a limit-order book. We tie these two quantities together by modeling what we will call the liquidity cost curve, or simply cost curve. We define this to be the total fee paid by the liquidity taker as a function of the volume asked for. If the liquidity taker only executes limit-orders at the best ask or bid, the fee will be equal to the bid-ask spread times the volume. Some clients, however, might wish to go beyond the best bid or ask and take more volume from the market. The marginal fee then becomes a function of the volume.

Trade volumes are difficult to model, in part because they are so dependent on liquidity takers' decisions and their execution strategies. Empirically, trade volumes present very heavy tails and are strongly autocorrelated (\cite{Bouchaud3}): there are frequent outliers, buys tend to follow buys, and sells tend to follow sells, which is what makes them hard to evaluate statistically. Limit-orders, on the other hand, are mean-reverting and have been widely studied in the literature (see, for instance, \cite{Bouchaud3,Cont,Farmer}). The interplay between trend-following liquidity takers and mean-reverting liquidity providers is what makes the market reach the critical state of diffusion, as seen in \cite{Bouchaud}. From a modeling stand point, all these attributes are what makes a direct statistical description of trade volumes so difficult, which is why the introduction of a codebook is desirable.

\subsection{Market making}

Market makers are a special class of liquidity providers. They essentially act as a scaled down version of the market itself, always providing limit-orders on both sides of the mid-price. How these limit-orders are placed in terms of volume, distribution, and distance from the mid-price, determines the pricing strategy. We can therefore define the liquidity curve associated with a market maker's pricing strategy. The agents trading with a market maker are considered as his \emph{clients}. Market making is a purely passive strategy that essentially corresponds to the service of providing liquidity to the market against a fee. This strategy makes money as long as the pricing accurately anticipates future price variations. The first step towards that objective is to find a model for client volumes that is consistent with price dynamics.

There are two schools of thoughts on market making models. The first focuses on inventory risk. There, the market maker has a preferred inventory position and prices according to his risk aversion to diverging inventory levels. Some of the first models of this type are \cite{Garman,Amihud}, which can be found in \cite{Hasbrouck}. The second, initiated by \cite{Kyle}, focuses on adverse selection, usually distinguishing between informed and noise trades. This paper belongs to the second line of thought.

According to our stylized definition of a market maker, the latter does not possess a view on the market. Clients, on the other hand, have views on the market and this leads to trading needs. When this view is short term, then the client has a statistical edge on the rest of the market and tries to push that edge to make a short term profit. But even when the view is long term, the client will still attempt to minimize transaction costs by predicting prices on a short time scale. Long-term strategies and liquidity constraints can therefore be modeled as noise around optimal short term execution strategies. The intuition behind this is that long term constraints, while the main motives behind most trades, will not necessarily dictate their execution. This is the case because there are increasingly more layers between the entity that formulates the long term strategy (for example investment banks or hedge funds) and the one that actually executes it (for example brokers or execution engines). But only the latter actually has a short term view on the market, which is what the market maker is interested in. Optimal executions based on short term beliefs and associated trade volumes form therefore the cornerstone of the upcoming analysis and will link prices to trades.

\subsection{Alpha}

Alpha is the term often used to signify that a client has a directional view on the market. In an adverse selection model the aim of any market maker is to discover this view and hedge against it. Most papers (see for example \cite{Almgren,Bouchaud,Bouchaud2,Hasbrouck}) introduce a notion of market impact or response function,  to study the relationship between trades and prices. Trade size is often  -- though not always -- included, and a lot of work has been done to fit various curves to the response function. A large number of papers on execution strategies (\cite{Alfonsi,Almgren,Almgren2,Wang}) rely heavily on this notion of market impact. So do most practitioners. Essentially, alpha and market impact are the same quantity, even though the stories told to justify these concepts are quite different. In the market impact literature, the premise is that a trade \emph{causes} a price movement, while alpha is viewed as an attempt at \emph{predicting} the movement. Response function is a more neutral term that allows for both interpretations.

In this paper, we present a simple model for client decisions that proves, under very general assumptions, a systematic relationship between trade sizes and short term expected price variations. It states that marginal costs should equal expected price variations. This is an intuitive result given that price variations are the marginal gains of a risk-neutral client. In this model, each client has a short term view on the market and trades optimally according to this view: his optimal execution strategy depends upon his belief. The theoretical notion of \emph{implied alpha} appears quite naturally from the stochastic optimal control problem that defines the client's model.

On the market maker side, our model tries to capture the fact that market makers do not act on a personal view of the market. The market maker uses the notion of client implied alpha to infer an approximate price process from his clients' behaviors, essentially aggregating the views of his clients to form a probability distribution on the price. The optimal order book strategy then replicates this distribution, with a corrective term that takes into account the profitability of trades, that is, the trade-off between spread and volume.

\subsection{Results}

The thrust of this paper is to propose a framework in which a market making strategy appears endogenously. This framework is based on
\begin{enumerate}
\item a result to imply a market view from client trades under a simple yet robust model of client behavior;
\item a procedure for the market maker to infer his own view on the market from that of his clients;
\item a profitability function that measures how profitable a posted volume on the order book is likely to be;
\item a penalizing term taking into account possible feedback effects of the market maker's decision.
\end{enumerate}
The last part will be motivated by the idea that, just as the market maker implies information from his clients' trades, clients can infer information from his order book. Our hypotheses are chosen, and our results are derived, in order to make the market making problem tractable. Once the theoretical framework is put in place, the market maker optimal control problem is formulated and solved, leading to an explicit market making strategy.

\vskip 2pt
The story told by the model is closest to that of the informed trade literature \cite{Kyle}, although one major difference is that there is no clear cut distinction between informed and noise traders. Our conservative market maker assumes that all of his clients' implied alphas carry information. Moreover, the existence of at least one informed trader underpins the estimation formula used by the market maker.
The paper also relates strongly to both the market impact \cite{Almgren,Bouchaud,Bouchaud2,Hasbrouck} and optimal execution \cite{Alfonsi,Almgren2,Almgren,Wang} literatures. A cost function $c_t$ and an order book $\gamma''_t$ are derived endogenously from the premises of the model. We hope that the proposed implied alpha codebook will find applications in other limit-order book models.
Finally, while we ignore inventory risk to make our market maker risk-neutral, our approach still relates to utility function and inventory models such as \cite{Amihud,Garman}.   

\vskip 2pt
The results of the paper are organized in four sections. First, the setup for the model is given, which includes a methodology for modeling heterogeneous beliefs and transaction costs. Second, a simple client model is presented and solved, leading to a relationship between trades and alphas which  motivates the market maker's choice for a codebook. Third, the market maker model is built step by step, starting from the client model and working through a series of results and approximations to lead to a reasonable control problem. Last, the market maker's problem is solved and the dynamics of the order book are determined analytically, For the sake of definiteness, we focus on a tractable example for which we  compute the two-humped limit order book shape. An appendix is added at the end of the paper to provide the proofs of two technical results which, had they been included in the text, could have distracted from the main objective of the modeling challenge.

\section{Setup of the model}

In this section, all the elements of the model are presented.

\subsection{Heterogeneous beliefs on the price}

Consider $n$ clients and one market maker who interact on an electronic exchange, with $n$ very large. We will denote by $i \in \{1...n\}$ a client index, and $k\in \{0...n\}$ a generic index, with $k=0$ corresponding to the market maker. We first introduce the following setting for the model:
\begin{enumerate}
\item a filtered probability space $(\Omega, \F, \left(\F_t\right)_{t \geq 0}, \P)$ representing the ``real life'' filtration and probability measure. The filtration is generated by a $d$-dimensional $\P$ Wiener process $W_t$.
\item a different filtration and measure $((\F^k_t)_{t \geq 0}, \P^k)$ for each of the agents. Assume furthermore that $\F^k_t \subset \F_t$, that $\P^k|_{\F^k_t}$ and $\P|_{\F^k_t}$ are equivalent and that $\P^0|_{\F^0_t} = \P|_{\F^0_t}$.\label{equivalence}
\item a $d^k$-dimensional $\P$ Wiener process $W^k_t$ that generates the filtration $(\F^k_t)_{t \geq 0}$.\label{BMs}
\item a price process $p_t$ which is an It\^{o} process adapted to \emph{all} the filtrations $\left(\left(\F^k_t\right)_{t \geq 0}\right)_{k=0...n}$.
\item the drift and volatility of $p_t$ grow at most polynomially in $t$ under all probability measures.
\end{enumerate}
where the last hypothesis must be understood in the a.s. and $L^2$ sense.

Let
\begin{equation}
dp_t = a_t dt + \sigma_t dW_t
\end{equation}
be the It\^o decomposition of $p_t$ under $\left(\left(\F_t\right)_{t\geq 0},\P\right)$. Hypotheses \ref{equivalence} and \ref{BMs} imply that there exists an $\left(\F^k_t\right)_{t\geq 0}$ adapted process $r^k_t$ such that
\begin{equation}
W^k_t = B^k_t + \int_0^t r^k_s ds
\end{equation}
for some $\P^k|_{\F^k_t}$ Wiener process $B^k_t$ and with $r^0_t = 0$. Because $W^k$ is $d_k$ dimensional, so are $r^k_t$ and $B^k_t$. Furthermore, by the martingale representation theorem, given that $W^k_t$ is a $\P$ martingale, there exists an $\left(\F_t\right)_{t \geq 0}$ adapted, $d_k \times d$ dimensional matrix $\Sigma^k_t$ such that 
\begin{equation}
dW^k_t = \Sigma^k_t dW_t
\end{equation}
Finally, agent $k$ has the following view on the market under $\left(\left(\F^k_t\right)_{t\geq 0}, \P^k\right)$:
\begin{equation}
dp_t = a^k_t dt + \sigma^k_t dB^k_t
\end{equation}
with $\sigma_t = \sigma^k_t \Sigma^{k}_t $ and $a^k_t = a_t + \sigma^k_t r^k_t$. In particular, $\sigma_t$ needs to live in the intersection of the images of all the $\left(\Sigma^{k}_t\right)^T$ for $p_t$ to be adapted to all the filtrations. Note that the $d^k$ is allowed to differ from one agent to another, in which case the $\sigma^k_t$ must be of different dimensions and hence differ.

In conclusion, all the agents have views on the price process that potentially conflict with each other's probability measure and even filtration, but can be compared coherently within the larger probability space $(\Omega, \F, \left(\F_t\right)_{t \geq 0}, \P)$.

\begin{example}
Consider the case where you have three Wiener processes $W^1$, $W^2$ and $W^3$. Let $p_t = W^1_t+W^2_t+W^3_t$, $\F^i_t = \sigma\left(\left(W^i_s\right)_{s\leq t}, \left(p_s\right)_{s\leq t}\right)$ and $r^i_t = f(W^i_t)$. You then have that:
\begin{align*}
dp_t &=  dW^1_t + dW^2_t + dW^3_t \;\textit{under}\; \P \\
     &= f(W^1_t) dt + dB^{1,1}_t + dB^{1,2}_t \;\textit{under}\; \P^1\\ 
     &= f(W^2_t) dt + dB^{2,1}_t + dB^{2,2}_t \;\textit{under}\; \P^2
\end{align*}
where the two last decompositions are not adapted with respect to the other filtration, but the price process is nevertheless adapted to \emph{both} filtrations.
\end{example}

\subsection{Transaction costs}

We now allow trades among agents. All clients have a cumulative position $L^i$ in the asset, which starts off at $0$ at the beginning of the trading period. For the market maker, we rescale all the quantities according to the number of clients, hence $L^0_t$ is his average cumulative position per client. Clients control their position through its first derivative, $l^i_t$ but incur transaction costs. The market maker, on the other hand, has no direct control over his position, but receives the liquidity fee $c_t$. To be precise, the market maker announces to his clients a cost function $l \longmapsto c_t(l)$, which denotes the price of trading at speed $l$ at moment $t$. A client then chooses her preferred trading volume $l^i_t$, and pays $c_t(l^i_t)$ in total transaction fees. We make the following hypotheses on these two processes:
\begin{enumerate}
\item The trade volume process $l^i_t$ for each client is adapted to $\left(\F^i_t\right)_{t \geq 0}$ and $\left(\F^0_t\right)_{t \geq 0}$. \label{see_trades}
\item The cost function process $c_t$ is adapted to all filtrations. \label{see_costs}
\item Marginal costs are defined: $c'_t$ is almost everywhere continuous. \label{diff_costs}
\item Clients may choose not to trade, $c_t(0) = 0$ and the mid-price is well defined at $p_t$, $c'_t(0)=0$. \label{mid-price}
\item Marginal costs increase with volume: $c_t$ is convex. \label{convex_costs}
\item There is a fixed amount of liquidity the market maker offers. \label{fixed_vol}
\end{enumerate}
Hypotheses \ref{see_trades} and \ref{see_costs} describe the information the different agents have access to. \ref{diff_costs}-\ref{convex_costs} are intuitive properties that the cost function must verify to make sense in terms of transaction costs. Finally, the hypothesis \ref{fixed_vol} is left purposefully vague, because it is best expressed with notation we introduce now.

\subsubsection{Duality and link to the order book}

This section introduces a change of variable that is both mathematically convenient and will carry deeper financial meaning when the model is solved. The new variable $\gamma$ is defined as follows:
\begin{equation}
\gamma_t(\alpha) = \sup_{l\in \text{supp}(c_t)} \left( \alpha l - c_t(l)\right) 
\end{equation}
Under the assumptions $c_t \in C^1$, convex and $c_t(0)=0$, this is equivalent to defining $\gamma'_t = (c'_t)^{-1}$ and $\gamma_t(0) = 0$. In that case, $\gamma_t$ is simply the Legendre transform of $c_t$. The following properties can be derived:
\begin{enumerate}
\item The Legendre transform maps the space of convex functions onto itself. It is its own inverse. In particular, $\gamma''_t$, where the second derivative has to be understood in the sense of distributions, is a positive, finite measure.
\item $\forall \alpha \in \mathbb{R}, \forall l\in \mathbb{R}, \gamma_t(\alpha) + c_t(l) \geq \alpha l$ (Fenchel's inequality), with equality for $\alpha^* = c'_t(l)$ or $l^* = \gamma'_t(\alpha)$. 
\item $\gamma''_t$ is a description of the limit-order book of the market maker. Indeed, it represents marginal volumes as a function of marginal costs. The property $\gamma''_t$ being a positive, finite measure corresponds to the fact that the market maker can only post positive, finite volumes on the order book.
\end{enumerate}
This means we can actually replace $c_t$ by the second derivative of its dual $\gamma''_t$ as the control variable of the market maker. Similarly, $\alpha^i_t$ is the dual to the client's control variable $l^i_t$. Call $\gamma''_t$ the market maker's liquidity offer and $\alpha^i_t$ the client's implied alpha. The second terminology will be explained later. We therefore recast all the hypotheses with these new dual variables:
\begin{enumerate}
\item The implied alpha process $\alpha^i_t$ for each client is adapted only to $\left(\F^i_t\right)_{t \geq 0}$ and $\left(\F^0_t\right)_{t \geq 0}$.
\item The liquidity process $\gamma_t''$ for each market maker is adapted to all filtrations.
\item We have $l^i_t = \gamma_t'\left(\alpha_t^i\right)$ and $\alpha^i_t = c'_t(l^i_t)$.  \label{duality_relationship}
\item Clients may choose not to trade and the order book is centered around the mid-price: $\gamma_t(0)=0$ and $\gamma'_t(0)=0$. \label{gamma_centered}
\item Only positive, finite volumes are posted on a market maker's liquidity offer: $\gamma''_t$ is a positive, finite measure. \label{orderbook_measure}
\item The total mass of $\gamma_t''$ is fixed. For convenience sake we will renormalize it to one, making $\gamma''_t$ a probability measure.
\end{enumerate}
As a consequence of \ref{duality_relationship}-\ref{orderbook_measure}, $l_t^i$ is bounded. Furthermore, $\gamma''_t$ therefore lives on the space of finite measures, which is a complete, separable metric space under the L\'evy-Prokhorov metric and a convex set.

\subsection{Putting it all together}

We summarize the model. First a public price process $p_t$ is given. Assume it is exogenously given to every one, that is, all the agents consider that they have no impact on the price under their probability measure, and only try to estimate its future movements. Then, a public liquidity offer, which can both be followed either by the cost function $c_t$ or its dual, $\gamma''_t$ is announced. Clients pick their trade volumes $l^i_t$, or equivalently, their implied alphas $\alpha^i_t$. We furthermore have the following state variable equations:
\begin{equation}
\left\{
	\begin{array}{rl}
		dL_t^i &= l^i_t dt \\
					 &= \gamma_t'\left(\alpha_t^i\right) dt \\
		dL_t^0 &= - \frac{1}{n}\sum_i \gamma_t'\left(\alpha_t^i\right) dt
	\end{array}
\right.
\end{equation}
since $\gamma'_t$ is bounded, $L_t^k$ has at most linear growth in $t$.

Finally, we write an infinite-horizon objective function for each agent, using distinct time scales $\beta^k$ for each of them. Cumulative positions appreciate or depreciate at every moment by $dp_t$ and client $i$ pays the liquidity fee $c_t(l^i_t)$ to the market maker when readjusting her portfolio. We also rescale his objective function according to the number of clients he has.
\begin{equation}
\left\{
	\begin{array}{rl}
		J^i &= \E_{\P^i}\left[\int_0^\infty e^{-\beta^i t} \left(L_t^i dp_t - c_t\left(l_t^i\right) dt\right)\right] \\ 
				&= \E_{\P^i}\left[\int_0^\infty e^{-\beta^i t} \left(L_t^i dp_t - \alpha_t^i \gamma_t'\left(\alpha_t^i\right) dt + \gamma_t\left(\alpha_t^i\right)dt\right)\right] \\
		J^0 &= \E_{\P}\left[\int_0^\infty e^{-\beta^0 t} \left(L_t^0 dp_t + \frac{1}{n}\sum_i \left(\alpha_t^i \gamma_t'\left(\alpha_t^i\right) - \gamma_t\left(\alpha_t^i\right)\right) dt\right)\right]
	\end{array}
\right.
\end{equation}
The first term of each objective function is bounded by the linear growth of $L_t^k$ and the polynomial growth of the drift and volatility of $p_t$ under $\P^k$.

\section{The Client Control Problem}

We first solve the problem from the clients' perspective, in order to know what drives trades. The results of this simple model will be used in the next section as a codebook for a more powerful market making model. The section concludes with a small robustness analysis on the underlying hypothesis of the model.

\subsection{Solving the control problem}

In this section, we solve the control problem for one generic client. Because her decisions have no impact on either $p_t$ or $c_t$, it will not affect any of the other clients' decisions. Her control variable is not even adapted to their filtration. Similarly, her own control problem is not affected by the other clients' decision. This allows us to drop the index $i$ in this subsection. Let $\left(\tilde{\P},\left(\tilde{F}_t\right)_{t\geq 0}\right)$ denote that client's probability measure and filtration and $\tilde{W}$ her Wiener process. Remember that the client tries to solve the following control problem:
\begin{itemize}
\item An admissible control $l_t$ is a stochastic process adapted to $\left(\tilde{F}_t\right)_{t\geq 0}$ that lives in the support of $c_t$. Because of the hypothesis $\left\langle\gamma''_t,1\right\rangle =1$, we have that the support of $c_t$ is included in $[-1,1]$, which means that the control set $A$ is bounded.  

\item The state variable $L_t$ verifies the dynamics:
\begin{equation}
dL_t = l_t dt
\end{equation}

\item The objective function is
\begin{equation}
\sup_l \E_{\tilde{\P}}\left[\int_0^\infty e^{-\beta t} \left(L_t dp_t - c_t\left(l_t\right)dt\right)\right]
\end{equation}
We assume $\beta$ large enough for the problem to be well defined.
\end{itemize}

Using the notation introduced in the previous section, we have the central result:
\begin{theorem}{(Implied alpha)}

A client who trades optimally will follow the relationship
\begin{equation}
\alpha_t = \E_{\tilde\P}\left[\left. \int_t^\infty e^{-\beta(s-t)} dp_s\right|\tilde{\F}_t\right] \label{implied_alpha}
\end{equation}
where $\alpha_t$ is defined by the codebook $c'_t(l_t) = \alpha_t$. 
\end{theorem}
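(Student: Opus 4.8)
The plan is to treat this as a stochastic control problem in standard form and apply the Pontryagin maximum principle, exploiting the convexity of $c_t$ to turn the necessary conditions into sufficient ones. Write the client's price dynamics as $dp_t = \tilde a_t\,dt + \tilde\sigma_t\,d\tilde W_t$ under $(\tilde\P,(\tilde\F_t)_{t\ge0})$. First I would simplify the objective: since $L_t\,dp_t = L_t\tilde a_t\,dt + L_t\tilde\sigma_t\,d\tilde W_t$ and the discounted stochastic integral $\int_0^\infty e^{-\beta t}L_t\tilde\sigma_t\,d\tilde W_t$ is a martingale with zero expectation (here the linear growth of $L_t$, the polynomial growth of $\tilde\sigma_t$, and $\beta$ large combine to give integrability), the functional reduces to $\E_{\tilde\P}\big[\int_0^\infty e^{-\beta t}(L_t\tilde a_t - c_t(l_t))\,dt\big]$. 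This is now a control problem with one-dimensional state $L_t$, drift $l_t$, no state diffusion, and running reward $e^{-\beta t}(L_t\tilde a_t - c_t(l_t))$, which is concave in $(L,l)$.

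Next I would write the Hamiltonian $H(t,L,l,Y) = Y\,l + e^{-\beta t}(L\tilde a_t - c_t(l))$, with associated adjoint pair $(Y_t,Z_t)$ solving the backward equation $dY_t = -e^{-\beta t}\tilde a_t\,dt + Z_t\,d\tilde W_t$. Maximizing $H$ in $l$ (permissible because $-c_t$ is concave) gives the first-order condition $c'_t(l_t) = e^{\beta t}Y_t$, that is $\alpha_t = e^{\beta t}Y_t$ by the codebook definition $\alpha_t = c'_t(l_t)$; convexity of $c_t$ makes this stationary point the unique maximizer and simultaneously upgrades the maximum principle to a sufficient condition, so the candidate is genuinely optimal.

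Then I would solve the adjoint equation explicitly. Integrating from $t$ to $T$, conditioning on $\tilde\F_t$ to kill the $Z$-martingale, and letting $T\to\infty$ under the transversality condition $\E_{\tilde\P}[Y_T\mid\tilde\F_t]\to0$, I obtain $Y_t = \E_{\tilde\P}\big[\int_t^\infty e^{-\beta s}\tilde a_s\,ds\mid\tilde\F_t\big]$. Multiplying by $e^{\beta t}$ yields $\alpha_t = \E_{\tilde\P}\big[\int_t^\infty e^{-\beta(s-t)}\tilde a_s\,ds\mid\tilde\F_t\big]$. Finally I would reinsert the martingale part: since $\E_{\tilde\P}\big[\int_t^\infty e^{-\beta(s-t)}\tilde\sigma_s\,d\tilde W_s\mid\tilde\F_t\big]=0$, the drift integral coincides with $\E_{\tilde\P}\big[\int_t^\infty e^{-\beta(s-t)}\,dp_s\mid\tilde\F_t\big]$, which is exactly \eqref{implied_alpha}.

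The main obstacle I anticipate is the infinite-horizon analysis rather than the algebra: justifying the transversality condition $\E_{\tilde\P}[Y_T\mid\tilde\F_t]\to0$ together with the convergence and zero-expectation properties of the various improper stochastic and Lebesgue integrals. These are precisely the points where I would lean on the standing hypotheses that $\beta$ is large enough and that the drift and volatility of $p_t$ grow at most polynomially in the a.s. and $L^2$ senses, guaranteeing that the discounted running reward, the adjoint process, and the reduced martingale terms are all well defined and integrable. A secondary point worth verifying is admissibility: the stationary control $l_t = (c'_t)^{-1}(\alpha_t) = \gamma'_t(\alpha_t)$ automatically lands in the bounded control set $[-1,1]$ because $\gamma'_t$ is bounded, so no active control constraint complicates the first-order condition.
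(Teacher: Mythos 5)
Your proof is correct, and it follows the same high-level strategy as the paper (Pontryagin maximum principle, Hamiltonian linear in the state and concave in the control, explicitly solvable adjoint BSDE, sufficiency from convexity of $c_t$), but the key reduction step is genuinely different. The paper puts the gain into standard form by an integration by parts in time,
\begin{equation*}
\int_0^\infty e^{-\beta t} L_t\,dp_t = \left[e^{-\beta t}L_t p_t\right]_0^\infty - \int_0^\infty e^{-\beta t} l_t p_t\,dt + \beta\int_0^\infty e^{-\beta t} L_t p_t\,dt ,
\end{equation*}
so its adjoint driver is $\beta e^{-\beta t}p_t$, its adjoint solution is $Y_t=\E_{\tilde\P}\left[\left.\int_t^\infty \beta e^{-\beta s}p_s\,ds\right|\tilde{\F}_t\right]$, and the implied alpha emerges as $e^{\beta t}Y_t - p_t$; you instead reach standard form via the semimartingale decomposition $dp_t=\tilde a_t\,dt+\tilde\sigma_t\,d\tilde W_t$ under the client's measure, discard the martingale part in expectation, obtain the driver $e^{-\beta t}\tilde a_t$, and read off the alpha as $e^{\beta t}Y_t$ directly. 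The two adjoint processes differ by exactly $e^{-\beta t}p_t$, and the two computations are reconciled by the identity $\int_t^\infty e^{-\beta(s-t)}dp_s=\beta\int_t^\infty e^{-\beta(s-t)}p_s\,ds-p_t$, which is the paper's integration by parts localized at time $t$. What each approach buys: yours makes the economic reading (marginal cost equals discounted expected future drift) immediate and never carries $p_t$ itself through the Hamiltonian, but it must kill stochastic-integral expectations twice --- once to reduce the objective, once at the end to trade $\tilde a_s\,ds$ back for $dp_s$ --- each use resting on the polynomial-growth and large-$\beta$ hypotheses, exactly as you flag; the paper's route never needs the client's drift $\tilde a_t$ explicitly (its only input of this kind is the vanishing of the boundary term $e^{-\beta t}L_tp_t$), at the cost of a less interpretable adjoint variable. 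One caveat you share with the paper rather than introduce: the first-order condition $c'_t(l^*_t)=e^{\beta t}Y_t$ is assumed to admit a solution, i.e. the maximizer is characterized through $\gamma'_t=(c'_t)^{-1}$; your admissibility remark settles boundedness of the resulting control, but the interior-solution point is left implicit in both arguments --- it is in effect what the codebook statement presupposes.
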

\begin{proof}

The aim now is to apply the Pontryagin maximum principle to the above control problem. The gain function is not in standard form, but a simple integration by parts solves this issue.
\begin{equation}
\int_0^\infty e^{-\beta t} L_t dp_t = \left[ e^{-\beta t} L_t p_t\right]_0^\infty  - \int_0^\infty e^{-\beta t} l_t p_t dt + \int_0^\infty e^{-\beta t} L_t \beta p_t dt.
\end{equation}
The first term is equal to $0$ because we assume $L_0 = 0$ and $\lim_{t\rightarrow \infty} e^{-\beta t} L_t p_t = 0$ by the linear growth of $L_t$ and the polynomial growth of $p_t$. The client therefore maximizes:
\begin{equation}
 \E_{\tilde{\P}}\left[\int_0^\infty e^{-\beta t} \left( \left(\beta L_t - l_t\right) p_t -c_t\left(l_t\right)\right)dt\right]
\end{equation}
We write out the generalized Hamiltonian of the system:
\begin{equation}
	\begin{array}{rl}
		\mathcal{H}(t, \omega, L, l, Y) &= l Y +  e^{-\beta t}\left((\beta L - l) p_t - c_t(l)\right)
	\end{array}
\end{equation}
The generalized Hamiltonian is linear in $L$ and concave in $l$ by convexity of $c$, and therefore overall concave in $(L,l)$.

$p_t$ and $c_t$ are exogenously defined It\^o process. Then, the backward equation
\begin{equation}
  -dY_t =  \beta p_t e^{-\beta t} dt - Z_{t} \cdot d\tilde{W}_t
\end{equation}
has a unique solution
\begin{equation}
  Y_t = \E_{\tilde\P}\left[\left. \int_t^{\infty} \beta p_s e^{-\beta s} ds\right| \tilde{\F}_t\right]
\end{equation}
By polynomial growth of the volatility of $p_t$, the $Z$ term of the Backward Stochastic Differential Equation (BSDE from now on) satisfies the growth condition (\ref{Z_growth}).

Therefore, the candidate optimal control is $l^*_t$ verifying $e^{\beta t}Y_t - p_t = c'_t(l^*_t)$. 

This determines $l_t = \gamma'_t\left(e^{\beta t} Y_t - p_t\right)$. Therefore the forward equation of $L_t$ has a unique solution.

Finally, the quantity
\begin{align}
c'_t(l^*_t) &=  e^{\beta t} Y_{t} - p_t = \E_{\tilde\P}\Bigg[\left. \int^\infty_t \beta p_s e^{-\beta(s-t)} ds - p_t\right|\tilde{\F}_t\Bigg] \\
					&= \mathbb{E}_{\tilde\P}\Bigg[\left. \int_t^\infty e^{-\beta(s-t)} dp_s\right|\tilde{\F}_t\Bigg]
\end{align}
can be seen as the ``implied alpha'' of the deal, that is, the difference between the projected value into the future and the current value of $p$.
\end{proof}

While it makes intuitive sense\footnote{All the formula says is that marginal costs equal expected marginal gains.}, this a non-trivial result.  Indeed, $c_t$ and hence $l_t$ depend on the market maker's decision, and yet $\alpha_t = c'_t\left(l_t\right)$ becomes a quantity that is ``intrinsic'' to the client. It is independent of the market maker's pricing and only depends on the client's view on the market. It intuitively represents the client's price estimator and summarizes her beliefs on the price dynamics.  Note that the discount factor now becomes the time-scale of her prediction. We define the quantity the client tries to predict as the \emph{realized alpha} over the time scale $\frac{1}{\beta}$:
\begin{equation}
\alpha^r_t = \int_t^\infty e^{-\beta(s-t)} dp_s.
\end{equation}
This coincides with what practitioners refer to as the 'alpha' of a client.

\subsection{Dynamics of the implied alpha}
(\ref{implied_alpha}) can be rewritten in terms of It\^o dynamics:
\begin{align}
d\alpha_t &= \beta \alpha_t dt - dp_t + e^{\beta t} Z_t d\tilde{W}_t \\
					&= \beta \alpha_t dt - dp_t +  \theta_t d\tilde{W}_t \label{central}
\end{align}
with $\theta_t = e^{\beta t} Z_t $ therefore being the volatility of the estimation. 
All these equations summarize the link between optimal client volumes and price dynamics \emph{under the client's probability measure and filtration}. 

Three things can be noted. 
\begin{enumerate}
\item The drift of the implied alpha is the result of two opposing forces. On the one hand, the self-correlation term guarantees a certain coherence in the client's decisions over the time-scale $\beta^{-1}$. On the other hand, the implied alpha, through the $-dp_t$ term, automatically takes into account the last price variation to recenter the estimation. 
\item $\theta_t$ is a measure of intelligence of a client over the price process $p_t$, given that in the limit where $\theta_t = 0$, a client has a perfect view on the market. Conversely, clients with a big $\theta_t$ will have a higher variance on their price estimator, and can at the limit be considered as ``noise'' traders. 
\item We can write the dynamics of $\alpha_t$ under $\P$ to obtain:
\begin{equation}
d\alpha_t = \beta \alpha_t dt  - dp_t + \theta_t \Sigma_t dW_t + \theta_t r_t dt \label{multi_central}
\end{equation}
which provides the link between trade and price dynamics. Note that, while under $\tilde{\P}$, $\alpha_t$ is intrinsic to the client, under $\P$, $\alpha_t$ may depend on the market maker's decisions. In words, this means that while the market maker cannot influence the client's decision under her own view of the market, he can affect that view itself.
\end{enumerate}

\subsubsection{The cost of information}

Given the above result, it is clear that a market maker has a privileged position on the market: he catches a glimpse of everyone's belief on the price. We can now give an interpretation of $\gamma$ beyond the fact that $\gamma''$ represents the order book:

Under a martingale measure of $p_t$, $c_t(l_t)$ represents the cost the client pays to the market maker for the liquidity $l_t$, and this is the conservative interpretation of transaction costs. However, $\gamma_t(\alpha_t)$ represents the cost the market maker pays \emph{under the client's view of the market}. Under $\tilde{\P}$, it is as if the market maker pays the client for information on the price process.

This gives us a good intuition about the market maker's strategy: he collects information from each client, pricing them according to the current beliefs and how much the new information brings to him:  $\gamma(\alpha)$ is essentially how much the market maker is willing to pay for a prediction of strength $\alpha$, assuming that the client is correct.

\subsection{Robustness analysis}

In this section we define the notion of a 'cash-insensitive' agent and generalize the implied alpha relationship to the utility function case. This illustrates the robustness and limits of the proposed codebook.

Define the state variables
\begin{equation}
\left\{
	\begin{array}{rl}
		dL_t &= l_t dt \\
		dK_t &= - \left(p_t l_t + c_t(l_t) \right) dt
	\end{array}
\right.
\end{equation}
The second variable represents the client's \emph{cash position} at time $t$. If we define her wealth process as $X_t = p_t L_t + K_t$ then it verifies the standard dynamics
\begin{equation}
dX_t = L_t dp_t
\end{equation}
The proposed decomposition has a clear economic meaning: the agent calculates her wealth by adding her cash $K_t$ and the marked-to-the-mid value of her asset position $L_t$. This means that a 'cash-insensitive' objective function would be of the standard form
\begin{equation}
 J = \E\left[U(X_\tau,p_\tau)\right] = \E\left[U(p_\tau L_\tau + K_\tau,p_\tau)\right]
\end{equation}
with $\tau$ a stopping time adapted to the client's filtration and $U$ her utility function. The special form guarantees that the client does not differentiate between wealth in cash and wealth in the asset. An agent concerned with the liquidity of the asset would not be 'cash-insensitive'. 

One could generalize the utility framework to 'cash-sensitive' agents by considering a utility function of the form $U(L_\tau, K_\tau, p_\tau)$. In this case, the below result would not hold.
The new Hamiltonian becomes:
\begin{equation}
\mathcal{H}(t,\omega, L, K, l, Y_L, Y_K) = l Y_L - (p_t l + c_t(l)) Y_K
\end{equation}
where the dual variables satisfy in the 'cash-insensitive' case the BSDEs
\begin{align*}
Y_{L,t} &= \E\left[\left. \partial_X U(X_\tau,p_\tau) p_\tau \right|\F_t\right] \\
Y_{K,t} &= \E\left[\left. \partial_X U(X_\tau,p_\tau) \right|\F_t\right]
\end{align*}
The optimal execution strategy therefore verifies:
\begin{equation}
c'_t(l^*_t) = \E\left[\left. \frac{Y_{K,\tau}}{Y_{K,t}} p_\tau \right|\F_t\right] - p_t
\end{equation}
which can be rewritten in a manner very similar to (\ref{implied_alpha}):
\begin{equation}
\alpha_t = \E_\mathbb{Q} \left[\left. p_\tau \right|\F_t\right] - p_t
\end{equation}
where $\frac{d\mathbb{Q}}{d\P} = \frac{Y_{K,\tau}}{\E\left[Y_{X,\tau}\right]}$ is a legitimate change of measure if $\partial_X U(X_\tau,p_\tau)$ is positive and integrable. In the case where $\tau$ is exponential and independent of the price process, we simply recover (\ref{implied_alpha}) under a different probability measure. Given that we assume all the clients to have differing probability measures anyway, we can without loss of generality stick to (\ref{implied_alpha}).

The above computation is somewhat formal, but can be made rigorous by giving explicit integrability assumptions on $\tau$ such that the growth condition (\ref{growth}) is verified. This is in particular the case when $\tau$ is independent of $p$ and has exponential moments, or if $\tau$ is bounded.

\section{Reworking the Market Maker's <odel}

In this section, we work under the measure $\P$  and often drop the superscrip $0$ referring to the market maker. Our goal is to provide an optimal market making strategy. Because of its complexity, the problem cannot be solved in full generality, and we propose a set of approximations which we justify on financial grounds.

The next four subsections identify a succession of simplifications guided by intuition based on the behavior of a typical market maker:
\begin{enumerate}
\item First, he should 'not hold a view on the market'. Mathematically speaking, this means that in his model for the price, the main explanatory variables are his client's beliefs. The error associated to this first simplification is proved to be small, though a function of the market maker's control.
\item Second, he should model how his clients' views might evolve into the future. A straightforward system of correlated Ornstein-Uhlenbeck processes is proposed to serve this purpose. This will be used to define an approximate objective function for the market maker.
\item The third approximation is made for mathematical convenience: we assume that the market maker has an infinite number of clients. This leads to the previous models becoming SPDEs.
\item Finally, a placeholder function is proposed to model the source of error identified in the first subsection.
\end{enumerate}

\subsection{Approximate Price Process}

As the market maker should not hold a view on the market, we refrain from directly modeling $p_t$ under the market maker's measure and filtration. Instead, the market maker constructs his model from the client's implied alphas. This is done in the following fashion:

Equation (\ref{multi_central}) can be turned around to describe price dynamics using the implied alpha of client $i$ under $\P$:
\begin{equation}
dp_t = \beta^i \alpha^i_t dt  - d\alpha^i_t + \theta^i_t \Sigma^i_t dW_t + \theta^i_t r^i_t dt
\end{equation}
and this equation holds true for all $i$. Notice that there is no contradiction with the uniqueness of the It\^o decomposition: these representations correspond to the same It\^o process rewritten in terms of the variable $\alpha^i_t$. Using a sequence of positive and constant weights $\lambda^i$ averaging to $1$ (i.e. such that $(1/n)\sum_{i=1}^n \lambda^i = 1$), we obtain:
\begin{equation}
dp_t = \frac{1}{n}\sum_{i=1}^n \lambda^i \left( \beta^i \alpha^i_t dt  - d\alpha^i_t \right) +  \left(\frac{1}{n}\sum_{i=1}^n \lambda^i \theta^i_t \Sigma^i_t \right) dW_t + \left(\frac{1}{n}\sum_{i=1}^n \lambda^i \theta^i_t r^i_t\right)dt \label{estimation}
\end{equation}
Again, this equation is just a reformulation of the It\^o decomposition of $p_t$. The advantage of that rparticular epresentation is that the first term is adapted to $\left(\F^0_t\right)_{t \geq 0}$. Hence, if he also knows (or rather, chooses) the weights $\left(\lambda^i\right)_{i=1...n}$, then the market maker can follow the first term of this decomposition \emph{in real time}. We introduce the special notation $p^\lambda_t$ for this term and we refer to it as his price estimator: So:
\begin{equation}
\label{approx}
dp_t^\lambda = \frac{1}{n}\sum_{i=1}^n \lambda^i \left( \beta^i \alpha_t^i dt  - d\alpha^i_t \right) .
\end{equation}
The remainder which we denote $\epsilon_t^\lambda$ includes quantities that are unknown to him since
\begin{equation}
d\epsilon^\lambda = \left(\frac{1}{n}\sum_{i=1}^n \lambda^i \theta^i_t \Sigma^i_t \right) dW_t + \left(\frac{1}{n}\sum_{i=1}^n \lambda^i \theta^i_t r^i_t\right)dt 
\end{equation}
If we replace $p$ by $p^\lambda$ in the market maker's problem, the only difference appears in the objective function, which now contains an extra term. We refer to is as the error term:
\begin{equation}
\text{err} = \mathbb{E}\left[ \int_0^\infty e^{-\beta t} L_t \left(\frac{1}{n}\sum_{i=1}^n \lambda^i \theta^i_t r^i_t\right)dt \right]
\end{equation}
on the market maker's objective function. Next we define 
\begin{equation}
\left(\sigma^i\right)^2 = \E\left[\int_0^\infty e^{-\beta t} |\theta^i_t|^2 dt\right].
\end{equation}
This quantity is a measure of how 'noisy' client $i$ is, and can be estimated by fitting the expression for the implied alpha to historical data. To be more specific, recall that under the client measure and filtration,  
\begin{equation}
 \alpha^i_t = \mathbb{E}_{P^i}\left[\left. \int_t^{\infty} e^{-\beta^i (s-t)} dp_s \right| \mathcal{F}^i_t \right]
\end{equation}
and that $\theta^i_t $ is the error on this estimation. Hence, the smaller $\theta^i_t $, the closer the implied alpha is to the realized one, which means that he client is particularly well informed. Note that in the finance literature, clients are often called informed if their market impact function (their average alpha) is unusually large. Our notion of intelligence of the price process does not coincide with this practice, as one can have at the same time a systematically small though correct alpha. In this paper, an informed trader is a trader for whom the implied and realized alphas nearly coincide, whereas for a noise trader, relationship (\ref{implied_alpha}) has much higher variance (due, for example, to liquidity concerns, a strongly non-linear utility function, or a poor filtration). The $\beta^i$ and $\sigma^i$ can be estimated from historical data by regressing the implied alpha against the realized one, that is for example by solving the least squares regression problem
\begin{equation}
\inf_{\beta>0} \frac{1}{N} \sum_{k=1}^N \left(\alpha^i_{t_k} - \int_{t_k}^{\infty} e^{-\beta (s-{t_k})} dp_s\right)^2
\end{equation}
where the $t_k$ are the times in the past at which client $i$ traded.

A skilled market maker can therefore construct his approximate price process by choosing a $\lambda$ which puts most of its weight on such intelligent clients and little weight on \emph{noise traders} with large $\sigma^i$'s.

Using Cauchy-Schwartz's inequality and the particular choice of weights 
\begin{equation}
\lambda^i = \frac{n \left(\sigma^i\right)^{-2}}{\sum_j \left(\sigma^j\right)^{-2}},
\end{equation}
and assuming $L_t$ is uniformly  bounded by a constant $\overline{L}$, we obtain: 
\begin{align*}
|\text{err}|^2  &\leq  \beta^{-1} (\overline{L})^2    \E\left| \int_0^\infty \beta e^{- \beta t} \frac{1}{n} \sum_{i=1}^n \lambda^i  \theta^i_t r^i_t dt \right|^2 \\
							&\leq  \beta (\overline{L})^2  \frac{1}{n} \sum_{i=1}^n \int_0^\infty e^{- \beta t} \E|\lambda^i \theta_t^i|^2  \frac{1}{n}  \int_0^\infty e^{- \beta t}\sum_{i=1}^n\E\left|r^i_t\right|^2 dt \\
							&\leq \beta I^{-1} (\overline{L})^2    \int_0^\infty e^{- \beta t}  \frac{1}{n}\sum_{i=1}^n \E\left|r^i_t\right|^2 dt \\
\end{align*}
where 
\begin{equation}
I^{-1} = \min_\lambda  \frac{1}{n} \sum_{i=1}^n \left(\lambda^i \sigma^i\right)^2 =  \left(\frac{1}{n}\sum^n_{i=1} \left(\sigma^i\right)^{-2}\right)^{-1}.
\end{equation}
$I$ can therefore be seen as a measure of the aggregate intelligence the clients have over the price process. Note that it suffices for \emph{one} $\sigma^i$ to be of order $\epsilon$ for $I$ to be of order $\epsilon^{-1}$. By Girsanov's theorem, we then have that
\begin{equation}
\E\left|r^i_t\right|^2 = -2 \frac{d}{dt} \E \log\frac{d\P^i|_{\F^i_t}}{d\P|_{\F^i_ t}},
\end{equation}
and finally, 
\begin{equation}
|\text{err}|^2 \leq 2 \beta^2 I^{-1} (\overline{L})^2  \int_0^\infty e^{- \beta t} \frac{1}{n}\sum_{i=1}^n \E \log\frac{d\P^i|_{\F^i_t}}{d\P|_{\F^i_t}} dt \label{err_bound}
\end{equation}
Two important remarks are in order at this point. First, as long as at least \emph{one} agent is well informed, $I^{-1}$ is small. Second, the dependence of $p$ and $\alpha^i$ upon the market maker's control $\gamma''$ is hidden in the Radon Nykodym derivative $\frac{d\P^i|_{\F^i_t}}{d\P|_{\F^i_t}}$. To understand why it is (unfortunately) reasonable to assume that this term may be strongly dependent upon $\gamma''$ is because a client can use the information on the order book $\gamma''$ publicly available as one of the sources of information he uses to form his probability measure on the price. This problem will be addressed in the last subsection. 

\subsection{Approximate Objective Function}

In this subsection, we take a crucial methodological step. Instead of maximizing 
\begin{equation}
	J = \E\left[\int_0^\infty e^{-\beta t} \left(L_t dp_t + \frac{1}{n}\sum_i \alpha_t^i \gamma_t'\left(\alpha_t^i\right) dt - \frac1n\sum_i\gamma_t\left(\alpha_t^i\right)dt\right)\right]
\end{equation}
we assume that the market maker maximizes the \emph{approximate} objective function
\begin{equation}
	J^\lambda = \int_0^\infty e^{-\beta t} \frac{1}{n}\sum_{i=1}^n  \mathbb{E}\left[L_t \lambda^i (\beta^i - \beta) \alpha^i_t + \left(\alpha^i_t - \frac{1}{n}\sum_{j=1}^n \lambda^j \alpha^j_t\right) \gamma_t'(\alpha^i_t) - \gamma_t(\alpha^i_t)\right]dt
\end{equation} 
subject to a constraint of the form $|\text{err}|^2 \leq C$ for some constant $C>0$. The new objective function was obtained by replacing $dp_t$ by $dp_t^\lambda$ and integrating by parts.
Because of our choice of the form of the approximate price (\ref{approx}), the market maker's objective function does not depend upon $p_t$ anymore, and \emph{he only needs to model the client belief distribution}. This is consistent with the intuition that a market maker should not hold a view on the market. Rather, he should model the behavior of his clients with respect to each other, and price according to the information they provide. This is exactly the approach used in what follows.

However, we still need to propose a model for the $\alpha^i_t$. For reasons of tractability we choose them as correlated Ornstein-Uhlenbeck processes:
\begin{equation}
	d\alpha^i_t = - \rho \alpha^i_t dt + \sigma dM^i_t + \nu dW_t
\end{equation}
with $\rho >0$ and the $M^i$ Wiener processes which are independent of each other and of $W$. $\sqrt{\sigma^2+\nu^2}$ is the overall volatility level of a client, and $\nu$ the volatility that is due to some common information amongst client s(for example, the movement of the midprice).
The next step of our strategy is to introduce a penalization term to account for the possible feedback effects hidden inside the error term introduced when replacing the objective function by its approximation. But first, we take the limit $n \rightarrow \infty$ to identify effective equations providing informative approximation to the properties of the original system comprising \emph{finitely many clients}.

\subsection{Infinitely Many Clients}
Assume that the number of clients of the market maker is large enough to justify an approximation in the asymptotic regime $n$ large. This will greatly improve the tractability of the model by allowing us to work in function spaces and rely on stochastic calculus tools to solve the model.
The mainstay of this subsection is the Stochastic Partial Differential Equation (SPDE for short):
\begin{equation}
dv_t = \left(\frac{1}{2}(\sigma^2 + \nu^2) \Delta v_t + \rho \nabla\left(\text{id }  v_t\right)\right)dt - \nu \nabla v_t dW_t \label{OU_SPDE}
\end{equation}
describing the dynamics of an infinite dimensional measure valued Ornstein-Uhlenbeck  process. The following lemma links this macroscopic SPDE to our microscopic Orstein-Uhlenbeck model for the implied alphas.

\begin{proposition}
\label{pr:spde}
If $(\epsilon^i)_{i\ge 1}$ is a sequence of random variables such that $(\alpha^i_0,\epsilon^i)_{i\ge 1}$ is an iid sequence independent of $W$ and the sequence $(M^i)_{i\ge 1}$, and such that the joint distribution, say $m$, of all the couples $(\alpha^i_0,\epsilon^i)$ satisfies:
\begin{equation}
\label{of:m}
\int (|\alpha|^p+|\epsilon|^2) m(d\alpha,d\epsilon)<\infty,
\end{equation}
for all $p>0$, then for each $t\ge 0$, the limit 
\begin{equation}
\nu_t = \lim_{n \rightarrow \infty} \frac{1}{n} \sum_{i=1}^n \epsilon^i \delta_{\alpha^i_t}
\end{equation}
exists almost surely in the sense of weak convergence of measures, almost surely for every $t\ge 0$ it holds:
\begin{equation}
\label{of:integrability}
\int |\alpha|^p\;d\nu_t(\alpha)<\infty
\end{equation}
for every $p>0$, and the measure valued process $(\nu_t)_{t\ge 0}$ is a weak solution of the SPDE (\ref{OU_SPDE}) in the sense that
for any twice continuously differentiable function $f$ (i.e. $f\in C^2$) such as $f$ and its two derivatives have at most polynomial growth, we have that
\begin{equation}
d\left\langle f, v_t\right\rangle = \left\langle \frac{1}{2}(\sigma^2 + \nu^2)\Delta f -\rho id \nabla f,v_t\right\rangle dt + \nu \left\langle\nabla f,v_t\right\rangle dW_t
\end{equation}
Furthermore, for each $t>0$, the measure $v_t$ possesses an $L^2$ density almost surely.
\end{proposition}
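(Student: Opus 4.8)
The plan is to exploit the fact that the common Wiener process $W$ is the only source of dependence between distinct clients, so that, \emph{conditionally on} $W$, the pairs $(\alpha^i_t,\epsilon^i)$ form an i.i.d.\ sequence. Solving the Ornstein--Uhlenbeck equation explicitly gives
\[
\alpha^i_t = e^{-\rho t}\alpha^i_0 + \sigma\int_0^t e^{-\rho(t-s)}dM^i_s + \nu\int_0^t e^{-\rho(t-s)}dW_s,
\]
in which the last term is $W$-measurable and common to all clients, while the first two terms, together with $\epsilon^i$, are i.i.d.\ across $i$ and independent of $W$. First I would split $\epsilon^i=(\epsilon^i)^+-(\epsilon^i)^-$ to reduce the signed-measure statement to two statements about positive measures. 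For a bounded continuous test function $f$, the conditional strong law of large numbers yields $\langle f,\mu^n_t\rangle:=\frac1n\sum_i\epsilon^i f(\alpha^i_t)\to \E[\epsilon^1 f(\alpha^1_t)\,|\,W]$ almost surely; applying this along a countable convergence-determining family and using tightness (itself a consequence of the uniform moment bounds discussed below) upgrades this to a.s.\ weak convergence of $\mu^n_t$ to the measure $\nu_t$ characterized by $\langle f,\nu_t\rangle=\E[\epsilon^1 f(\alpha^1_t)\,|\,W]$. The integrability (\ref{of:integrability}) then follows because $\int|\alpha|^p\,d|\nu_t|\le \E[\,|\epsilon^1|\,|\alpha^1_t|^p\,|\,W]$, which is a.s.\ finite since $\E[|\epsilon^1|\,|\alpha^1_t|^p]\le \E[|\epsilon^1|^2]^{1/2}\E[|\alpha^1_t|^{2p}]^{1/2}<\infty$ by Cauchy--Schwarz, (\ref{of:m}), and the finiteness of all Gaussian moments.

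For the SPDE I would work at finite $n$, where everything is a bona fide It\^o process, and then pass to the limit. It\^o's formula applied to $f(\alpha^i_t)$, using $d\langle\alpha^i\rangle_t=(\sigma^2+\nu^2)\,dt$, gives $df(\alpha^i_t)=\mathcal{A}f(\alpha^i_t)\,dt+\sigma f'(\alpha^i_t)\,dM^i_t+\nu f'(\alpha^i_t)\,dW_t$ with generator $\mathcal{A}f=\tfrac12(\sigma^2+\nu^2)f''-\rho\,\mathrm{id}\,f'$. Multiplying by $\epsilon^i$ and averaging over $i$ yields the exact identity
\[
\langle f,\mu^n_t\rangle=\langle f,\mu^n_0\rangle+\int_0^t\langle\mathcal{A}f,\mu^n_s\rangle\,ds+\frac{\sigma}{n}\sum_{i=1}^n\int_0^t\epsilon^i f'(\alpha^i_s)\,dM^i_s+\nu\int_0^t\langle f',\mu^n_s\rangle\,dW_s.
\]
The $M^i$-martingale term has quadratic variation $\frac{\sigma^2}{n^2}\sum_i\int_0^t(\epsilon^i)^2 f'(\alpha^i_s)^2\,ds$, whose expectation is $O(1/n)$, so it vanishes in $L^2$; the remaining terms converge to their analogues with $\nu_s$ in place of $\mu^n_s$, producing exactly the weak form of (\ref{OU_SPDE}) after the integration-by-parts identities $\langle f,\Delta\nu_s\rangle=\langle f'',\nu_s\rangle$ and $\langle f,\nabla(\mathrm{id}\,\nu_s)\rangle=-\langle\mathrm{id}\,f',\nu_s\rangle$.

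The main obstacle is to make these passages to the limit rigorous when $f$ (and hence $\mathcal{A}f$, $f'$) is only of polynomial growth rather than bounded. This requires moment bounds $\sup_i\sup_{s\le t}\E|\alpha^i_s|^q<\infty$ uniform in $i$ and locally uniform in $t$, which follow from the explicit solution since each $\alpha^i_s$ is Gaussian conditionally on $\alpha^i_0$ with variance bounded on $[0,t]$, and $\alpha^i_0$ has all moments by (\ref{of:m}). These bounds supply the uniform integrability needed to pass from convergence of $\langle \mathcal{A}f,\mu^n_s\rangle$ and $\langle f',\mu^n_s\rangle$ to convergence of the time integral (dominated convergence) and of the stochastic integral (It\^o isometry together with dominated convergence), and they simultaneously justify the tightness invoked above. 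Some care is also needed to obtain the weak-convergence statement \emph{jointly} in $t$; this can be handled either by the countable-determining-class argument at each rational $t$ combined with a continuity/modification argument, or by a tightness argument for the processes $(\mu^n_\cdot)$ in a suitable space of measure-valued paths.

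Finally, for the $L^2$ density I would again use the conditional Gaussian structure. Decompose $\alpha^1_t=\eta^1_t+G^1_t$, where $\eta^1_t=e^{-\rho t}\alpha^1_0+\nu\int_0^t e^{-\rho(t-s)}\,dW_s$ is measurable with respect to $\alpha^1_0$ and $W$, while $G^1_t=\sigma\int_0^t e^{-\rho(t-s)}\,dM^1_s$ is a centered Gaussian of variance $\tau_t^2=\sigma^2(1-e^{-2\rho t})/(2\rho)$ independent of $(\eta^1_t,\epsilon^1)$ and of $W$. Since $\tau_t^2>0$ for $t>0$, conditioning on $W$ shows that $\nu_t=g_{\tau_t}*\kappa_t$, where $g_{\tau_t}$ is the $N(0,\tau_t^2)$ density and $\kappa_t$ is the finite signed measure defined by $\langle h,\kappa_t\rangle=\E[\epsilon^1 h(\eta^1_t)\,|\,W]$, whose total variation satisfies $\|\kappa_t\|_{\mathrm{TV}}\le\E|\epsilon^1|<\infty$. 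Young's inequality for the convolution of an $L^2$ function with a finite measure then gives $\|d\nu_t/da\|_{L^2}\le\|g_{\tau_t}\|_{L^2}\,\|\kappa_t\|_{\mathrm{TV}}=(4\pi\tau_t^2)^{-1/4}\,\E|\epsilon^1|<\infty$ almost surely, which is exactly the claimed $L^2$ density for every $t>0$.
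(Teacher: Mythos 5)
Your core strategy is the same as the paper's: condition on the common noise $W$ so that the pairs $(\alpha^i_t,\epsilon^i)$ become i.i.d., apply the conditional strong law of large numbers, and exploit the explicit Gaussian form of the Ornstein--Uhlenbeck solution. The paper's appendix stops essentially at the explicit Gaussian-integral representation of $\langle f,\nu_t\rangle$ and asserts that "the other results can be derived directly" from it; you flesh out those remaining steps, and in particular your $L^2$-density argument (writing $\nu_t$ as the convolution $g_{\tau_t}*\kappa_t$ of a nondegenerate Gaussian with a finite signed measure of total variation at most $\E|\epsilon^1|$, then applying Young's inequality) is a correct and clean instantiation of what the paper leaves implicit.

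There is, however, one step in your derivation of the SPDE that fails under the stated hypotheses. You dismiss the idiosyncratic martingale term by asserting that the expectation of its quadratic variation, $\frac{\sigma^2}{n^2}\sum_i\E\int_0^t(\epsilon^i)^2 f'(\alpha^i_s)^2\,ds$, is $O(1/n)$. Finiteness of that expectation requires a \emph{joint} moment of the form $\E\bigl[(\epsilon^1)^2\,|\alpha^1_0|^{q}\bigr]<\infty$ (with $q$ dictated by the polynomial growth of $f'$), and this is not implied by (\ref{of:m}), which controls only the marginal moments $\E|\epsilon^1|^2$ and $\E|\alpha^1_0|^p$. Concretely: on disjoint events $A_k$ with $\P(A_k)\sim 2^{-k}k^{-2}$ set $\alpha^1_0=k$ and $(\epsilon^1)^2=2^k$; then all moments of $\alpha^1_0$ and $\E(\epsilon^1)^2$ are finite, yet $\E[(\epsilon^1)^2|\alpha^1_0|^{q}]=\infty$ for every $q\ge 1$, so your quadratic-variation bound is vacuous. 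Mixed moments involving only a \emph{first} power of $\epsilon^1$ are controlled by Cauchy--Schwarz against $\E|\epsilon^1|^2$ --- which is exactly why your LLN and integrability steps (\ref{of:integrability}) are sound --- but the $L^2$ estimate on the martingale is not. The gap is fixable within your scheme: since $\epsilon^i$ does not depend on time, write the term as $\frac{\sigma}{n}\sum_i \epsilon^i\int_0^t f'(\alpha^i_s)\,dM^i_s$; conditionally on $\F^W$ these summands are i.i.d.\ with conditional mean zero and finite first moment, because $\E\bigl|\epsilon^1\int_0^t f'(\alpha^1_s)\,dM^1_s\bigr|\le (\E|\epsilon^1|^2)^{1/2}\bigl(\E\int_0^t f'(\alpha^1_s)^2\,ds\bigr)^{1/2}$ involves no joint moment with $\epsilon^2$; the conditional SLLN then sends the average to zero almost surely. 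Alternatively, truncate $\epsilon^i$ at level $K$ and let $K\to\infty$ after $n\to\infty$, or derive the SPDE by applying It\^o's formula directly to the explicit conditional-expectation representation (the paper's implicit route), which only ever involves first powers of $\epsilon$.
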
\begin{proof}
See appendix.
\end{proof}
The papers \cite{Kurtz2, Kurtz} provide similar results for a more general class of microscopic models, including existence and uniqueness of the solution of the SPDE (\ref{OU_SPDE}). However, given the simple form of the dynamics chosen in our particular model, we can provide an \emph{explicit} form for the solution and detailed properties on the nature of its tails. Note that the correlation between client beliefs is crucial in having a fully stochastic model, given that for $\nu =0$, $\nu_t$ only satisfies a deterministic partial differential equation.

\vskip 2pt
We shall use four measure valued solutions of the above SPDE. In each case, the weights $\epsilon^i$ are explicit functions of the parameters $\sigma^i$ and $\beta^i$ introduced earlier. Because the values of these parameters appear as outcomes of statistical estimation procedures in practice, assuming that they are random and satisfy some form of ergodicity is not restrictive\footnote{The fact that we have to enlarge the Brownian filtration at time $t=0$ to \emph{randomize} the coefficients does not impact the martingale representation theorem.}. To construct these measures we assume that $(\alpha^i_0)_{i\ge 1}$ is an iid sequence of random variables whose common distribution has finite moments of all orders. The first of our four measures is obtained by choosing $\epsilon^i\equiv 1$ for all $i\ge 1$. Then:
\begin{equation}
\label{of:mut}
\mu_t = \lim_{n \rightarrow \infty} \frac{1}{n} \sum_{i=1}^n \delta_{\alpha^i_t}.
\end{equation}
Next we assume that $\{(\sigma^i)^{-2}\}_{i\ge 1}$ is a sequence of positive random variables of order $1$ such that 
$\{(\alpha^i_0,(\sigma^i)^{-2})\}_{i\ge 1}$
is an iid sequence independent of $W$ and the sequence $(M^i)_{i\ge 1}$, so by choosing $\epsilon^i=(\sigma^i)^{-2}$ for all $i\ge 1$ we can define: 
\begin{equation}
\label{of:sigmat}
I_t = \lim_{n \rightarrow \infty}\frac{1}{n} \sum_{i=1}^n \left(\sigma^i\right)^{-2} \delta_{\alpha^i_t}.
\end{equation}
We shall also assume that the number $I=\mathbb{E}[(\sigma^i)^{-2}]$ is finite and strictly positive, and for each $t\ge 0$ we define the non-negative measure $\lambda_t$ by:
\begin{equation}
\label{of:lambdat}
\lambda_t = I^{-1}_0 I_t.
\end{equation}
Finally, we assume that $(\beta^i)_{i\ge 1}$ is a sequence of bounded random variables such that $\{(\alpha^i_0,(\sigma^i)^{-2},\beta^i)\}_{i\ge 1}$
is an iid sequence independent of $W$ and the sequence $(M^i)_{i\ge 1}$, so by choosing $\epsilon^i=\beta^i(\sigma^i)^{-2}$ for all $i\ge 1$ we can define: 
\begin{equation}
\label{of:betat}
\beta_t = I^{-1}\lim_{n \rightarrow \infty}\frac{1}{n} \sum_{i=1}^n \beta^i \left(\sigma^i\right)^{-2} \delta_{\alpha^i_t}.
\end{equation}
We now make a few remarks on the properties shared by essentially all the solutions  $(v_t)_{t\ge 0}$ of the SPDE (\ref{OU_SPDE}). For the sake of definiteness, we shall assume that $(v_t)_{t\ge 0}$ is a non-negative measure valued process solving this SPDE.
\vskip1pt\noindent
(1) The total mass $\langle 1,v_t\rangle$ of the measure $v_t$ is constant over time. Indeed, using the constant test function $f\equiv 1$ in (\ref{OU_SPDE})
we see that 
\begin{equation}
d\left\langle 1,v_t\right\rangle = 0 \label{SPDE_moment0}
\end{equation}
In particular, the intelligence assumption $I = \left\langle 1, I_0\right\rangle \ge \epsilon^{-1}$ is conserved over time.
\vskip1pt\noindent
(2) If we use the test function $f=id$ in (\ref{OU_SPDE})
where the identity function $id$ is defined by $id(\alpha)=\alpha$, then we see that the first moment of $v_t$ is itself an Ornstein-Uhlenbeck process mean reverting around $0$ since: 
\begin{equation}
 \label{SPDE_moment1}
d\left\langle id,v_t\right\rangle = -  \rho \left\langle id, v_t \right\rangle dt + \nu \left\langle 1, v_0\right\rangle dW_t.
\end{equation}
\vskip1pt\noindent
(3) Using $f=id^2$ we see that the second moment mean reverts around $\left(\sigma^2+\nu^2\right) \left\langle 1, v_0\right\rangle$ since:
\begin{equation}
d\left\langle id^2,v_t\right\rangle = \left(\left\langle\sigma^2 + \nu^2, v_0\right\rangle-  2 \rho \left\langle id^2, v_t \right\rangle\right) dt + 2 \nu \left\langle id, v_t \right\rangle dW_t.
\end{equation}
These Stochastic Differential Equations (SDEs for short) guarantee the existence of a constant $C_1$ such that:
\begin{align*}
\E\left\langle id^2,v_t\right\rangle &\leq e^{C_1 t} \\
\E\left\langle |id|,v_t\right\rangle &\leq \sqrt{\E\left\langle id^2,v_t\right\rangle} \leq e^{\frac{1}{2}C_1 t}.
\end{align*}
We shall use these estimates for the measures $\mu_t$, $\lambda_t$ and $\beta_t$. In fact similar estimates hold for moments of all order as can be proved by induction from (\ref{OU_SPDE}). We do not give the details as they will not be used in what follows.

\vskip 4pt
Coming back to the optimal control problem of the market maker, since $\gamma''_t$ belongs to a space of probability measures whenever the control $(\gamma_t)_{t\ge 0}$ is admissible,  the following estimates hold: 
\begin{align*}
||\gamma'_t||_\infty &\leq 1 \\ 
|\gamma_t(\alpha)| &\leq \alpha \\
\left|\left\langle \gamma'_t,v_t\right\rangle\right| &\leq \left\langle 1, v_0\right\rangle = O(1) \\
\E\left|\left\langle \gamma_t, v_t\right\rangle\right| &\leq \E\left\langle |id|,v_t\right\rangle = o\left(e^{\beta t}\right) \\
\E\left|\left\langle \text{id } \gamma'_t,v_t\right\rangle\right| &\leq \E\left\langle |id|, v_t \right\rangle = o\left(e^{\beta t}\right)
\end{align*}

\vskip 4pt
As an immediate consequence of the above remarks we have:
\begin{corollary}
For any progressively measurable process $(\gamma_t)_{t\ge 0}$ such that $\gamma_t''$ is a probability measure, the state dynamic equation of the market maker:
\begin{equation}
\label{fo:mmdynamics}
	dL_t = - \left\langle \gamma'_t, \mu_t \right\rangle dt
\end{equation}
makes sense, and for sufficiently large $\beta$, the approximate objective function
\begin{equation}
	J^\lambda = \int_0^\infty e^{-\beta t}  \mathbb{E}\left[L_t \left\langle id , \beta_t\right\rangle  + \left\langle - L_t \beta id + \left( id - \bar{\alpha}_t \right) \gamma'_t - \gamma_t ,\mu_t \right\rangle\right] dt \label{limit_objective}
\end{equation}
where $\bar{\alpha}_t = \left\langle id,\lambda_t\right\rangle$, is well defined.
\end{corollary}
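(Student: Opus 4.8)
The plan is to verify absolute integrability of every term appearing in the state equation (\ref{fo:mmdynamics}) and the objective (\ref{limit_objective}), exploiting the uniform control $\|\gamma'_t\|_\infty\le 1$, $|\gamma_t(\alpha)|\le|\alpha|$ together with the first–moment estimate $\E\langle|id|,v_t\rangle\le e^{C_1 t/2}$ valid for $v\in\{\mu,\lambda,\beta\}$. The threshold on $\beta$ will be exactly the requirement that $e^{-\beta t}$ dominate this exponential growth.

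First I would dispose of the state equation. Since $\gamma''_t$ is a probability measure and $\gamma'_t(0)=0$, the function $\gamma'_t$ is a centered distribution function of $\gamma''_t$, so $\|\gamma'_t\|_\infty\le\langle1,\gamma''_t\rangle=1$; moreover $\mu_t$ is a probability measure because its mass $\langle1,\mu_t\rangle=\langle1,\mu_0\rangle=1$ is conserved by (\ref{SPDE_moment0}). Hence $|\langle\gamma'_t,\mu_t\rangle|\le\|\gamma'_t\|_\infty\langle1,\mu_t\rangle\le1$ almost surely. Progressive measurability of $(\gamma_t)$ and the weak measurability of $t\mapsto\mu_t$ furnished by Proposition \ref{pr:spde} make $t\mapsto\langle\gamma'_t,\mu_t\rangle$ a bounded progressively measurable process, so the Lebesgue integral $L_t=L_0-\int_0^t\langle\gamma'_s,\mu_s\rangle\,ds$ is well posed and obeys the \emph{deterministic} linear bound $|L_t|\le|L_0|+t$ almost surely. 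This is the only property of $L_t$ the objective needs.

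Next I would bound (\ref{limit_objective}) term by term, applying Tonelli to the non-negative integrand $e^{-\beta t}\E|\cdots|$, so that finiteness of $\E\int_0^\infty e^{-\beta t}|\cdots|\,dt$ simultaneously guarantees that each inner expectation is finite for a.e.\ $t$ and that the time integral converges. Since $|L_t|\le|L_0|+t$ is deterministic it may be pulled out, giving $\E|L_t\langle id,\beta_t\rangle|\le(|L_0|+t)\,\E\langle|id|,\beta_t\rangle\le(|L_0|+t)\,e^{C_1 t/2}$ and likewise $\E|\beta L_t\langle id,\mu_t\rangle|\le\beta(|L_0|+t)\,e^{C_1 t/2}$. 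For the three remaining contributions the bounds $\|\gamma'_t\|_\infty\le1$ and $|\gamma_t(\alpha)|\le|\alpha|$ yield $\E|\langle id\,\gamma'_t,\mu_t\rangle|\le\E\langle|id|,\mu_t\rangle\le e^{C_1 t/2}$, then $\E|\bar\alpha_t\langle\gamma'_t,\mu_t\rangle|\le\E|\langle id,\lambda_t\rangle|\le e^{C_1 t/2}$ (using $|\langle\gamma'_t,\mu_t\rangle|\le1$), and $\E|\langle\gamma_t,\mu_t\rangle|\le\E\langle|id|,\mu_t\rangle\le e^{C_1 t/2}$; these are precisely the estimates tabulated before the statement. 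Every term is thus dominated by a constant multiple of $(1+t)e^{C_1 t/2}$, and $\int_0^\infty e^{-\beta t}(1+t)e^{C_1 t/2}\,dt<\infty$ as soon as $\beta>C_1/2$, which is the meaning of ``sufficiently large $\beta$''.

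The only genuinely delicate point is the cross term $L_t\langle id,\beta_t\rangle$, where the random state is paired with a measure-valued process: the argument above works cleanly only because the total mass of $\mu_t$ is conserved and equal to one, which renders the bound on $L_t$ deterministic and decouples it from $\beta_t$. Were the mass of $\mu_t$ random one would instead have to control this term by $\E[L_t^2]^{1/2}\,\E[\langle id,\beta_t\rangle^2]^{1/2}$ through the second–moment estimate $\E\langle id^2,v_t\rangle\le e^{C_1 t}$ and a Cauchy--Schwarz step, raising the threshold to $\beta>C_1$. Beyond securing this decoupling, the proof is routine bookkeeping once the moment estimates and the uniform control on $\gamma'_t$ and $\gamma_t$ are in hand.
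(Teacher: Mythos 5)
Your proof is correct and follows essentially the same route as the paper, which presents the corollary as an immediate consequence of the estimates $\|\gamma'_t\|_\infty\le 1$, $|\gamma_t(\alpha)|\le|\alpha|$, and $\E\langle|id|,v_t\rangle\le e^{C_1 t/2}$ for $v\in\{\mu,\lambda,\beta\}$ tabulated just before the statement; you have simply written out the bookkeeping (mass conservation giving the deterministic linear bound on $L_t$, then term-by-term domination by $(1+t)e^{C_1 t/2}$ and convergence for $\beta>C_1/2$) that the paper leaves implicit.
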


\subsection{Modeling the Error Term}

As argued in the first subsection, the approximation technique hinges on one hypothesis:  the existence of at least one '$\epsilon$-intelligent' client. Furthermore, the clients probability measure (by which we mean the distribution of the clients implied alphas) is potentially a function of the market maker's control. This causes an undesirable nonlinear feedback effect which needs to be reined in to avoid explosion of the approximation error. In this subsection we propose a direct description of the error term, leaving open the question of how to derive it from a specific model of the clients probability measures.

Intuitively, the feedback effect corresponds to how much new information the order book shape reveals to the clients. The clients' beliefs at time $t$ can be summarized by the probability measure $\mu_t$ and the order book by $\gamma''_t$. A reasonable model for the error term is given by the expression
\begin{equation}
\label{penalization}
E = \epsilon \E\int_0^\infty e^{-\beta t} H(\gamma''_t|\mu_t) dt
\end{equation}
where $H(\nu|\mu)$ denotes the Kullback-Leibler distance (also known as relative entropy) defined by
\begin{equation}
\label{fo:KL}
H(\nu|\mu)=
\begin{cases}
&\int f\left(\frac{d\nu}{d\mu}\right)d\mu, \;\text{whenever}\; \nu<<\mu;\\
&\infty \;\text{otherwise}
\end{cases}
\end{equation}
with $f(x) = x \log x$ . Note that $H(\nu|\mu)$ is minimal for $\gamma''_t$ = $\mu_t$ by convexity of $f$. As explained in the introduction, we choose this particular distance for its intuitive interpretation and the fact that it leads to an explicit expression for the two hump order book shape endogenous to the model. However, the results hold for general strictly convex functions $f$, in which case the pseudo-distance $H$ defined in (\ref{fo:KL}) is known as the f-divergence between the measures $\mu$ and $\nu$. See \cite{Csiszar}.

\section{The Market Maker's Control Problem}
\label{se:mm}

With all the pieces of our model in place, we now solve the market maker's control problem.

\subsection{Model Summary}
We consider a sequence $\left(\beta^i,(\sigma^i)^{-2}\right)_{i\ge 1}$ of random variables such that the assumptions of 
Proposition \ref{pr:spde} are satisfied with $\epsilon^i\equiv 1$, $\epsilon^i=(\sigma^i)^{-2}$ and $\epsilon^1=\beta^i(\sigma^i)^{-2}$ respectively,
so that the measure-valued processes $(\mu_t)_{t\ge 0}$, $(\lambda_t)_{t\ge 0}$ and $(\beta_t)_{t\ge 0}$ constructed in (\ref{of:mut}), (\ref{of:lambdat}) and (\ref{of:betat}) are well defined.

As explained earlier, the market maker's control at time $t$ is the convex function $\gamma_t$, and given our choice of penalizing terms, 
we expect its second derivative $\gamma_t''$ to be a probability measure absolutely continuous with respect to $\mu_t$ in order to avoid infinite penalties. So we refine the definition of the set ${\mathcal A}$ of admissible controls for the market maker as the set of random fields $(g_t(x))_{t\ge 0, x\in{\mathbb R}}$ such that $(g_t(x))_{t\ge 0}$ is progressively measurable for each $x\in{\mathbb R}$ fixed, and $\left\langle g, \mu_t\right\rangle =1$ for each  $t\ge 0$. Given an admissible control $g\in{\mathcal A}$, we define the function $\gamma_t$ as the anti-derivative of the function $\gamma'(\alpha) = \left\langle g_t 1_{[0,\alpha]}, \mu_t\right\rangle$ satisfying $\gamma_t(0)=0$. 
The objective of the market maker is to choose an admissible control in order to maximize his modified \emph{objective function} defined as:
\begin{equation}
\int_0^\infty e^{-\beta t}  \mathbb{E}\left[L_t \left\langle id , \beta_t\right\rangle  + \left\langle - L_t \beta id + \left( id - \bar{\alpha}_t \right) \gamma'_t - \gamma_t  - \epsilon  f\circ g_t ,\mu_t \right\rangle\right] dt
\end{equation}
where the controlled dynamics of the state $L_t$ of the system are given by:
\begin{equation}
\label{fo:statedynamics}
dL_t = - \left\langle \gamma'_t, \mu_t \right\rangle dt .
\end{equation}
We used the functions $\gamma_t$ and $\gamma_t'$ for consistency with earlier discussions. See below for forms of the state dynamics and market maker objective function written exclusively in terms of the density $g_t$). The above quantities are well-defined because of the estimates proven for the measures $\mu_t$, $\lambda_t$ and $\beta_t$ solving the SPDE.
 
\subsection{Solving the Market Maker Control Problem}
We denote by $Y$ the adjoint variable of $L$ so that the generalized (random) Hamiltonian of the problem can be defined as
\begin{equation}
\label{fo:hamiltonian}
H(t, L, g, Y)=
- \left\langle \gamma_t' , \mu_t \right\rangle Y + e^{-\beta t}  \left(L \left\langle id, \beta_t\right\rangle + \left\langle - \beta L id + (id- \bar{\alpha}_t)\gamma_t' - \gamma_t - \epsilon f \circ g, \mu_t\right\rangle \right)
\end{equation}
for any deterministic function $g$ as long as we define $\gamma$ and $\gamma'$ by $\gamma'(\alpha) = \left\langle g_t 1_{[0,\alpha]}, \mu_t\right\rangle$ and $\gamma(\alpha)=\int_0^\alpha\gamma'(\tilde\alpha)d\tilde\alpha$. Note that in the above expression of the Hamiltonian, $g$ is determinist and does not depend upon $t$, but $\gamma_t'$ which is the cumulative distribution function of the probability measure $\gamma_t''$ with density $g$ with respect to $\mu_t$ is random and depends upon $t$. Clearly, so is $\gamma_t$.
This Hamiltonian can be rewritten as
\begin{equation}
\label{fo:hamiltonian'}
H(t, L, g, Y)=
e^{-\beta t}  L \langle id, \beta_t -\beta  \mu_t\rangle +e^{-\beta t}\tilde H(g)
\end{equation}
where the modified Hamiltonian $\tilde H$ is defined by
$$
\tilde H(g)=- \langle \gamma_t' , \mu_t \rangle e^{\beta t}Y +  \langle  (id- \bar{\alpha}_t)\gamma_t' - \gamma_t - \epsilon f \circ g, \mu_t\rangle,
$$
and the other terms do not depend upon the control. Since the stochastic maximum principle proven in appendix says that we can look for the optimal control by maximizing the Hamiltonian, we shall maximize the modified Hamiltonian. For each choice of the admissible control $(g_t)_{t\ge 0}$, we consider the corresponding state process $(L_t)_{t\ge 0}$ given by (\ref{fo:statedynamics}) and the adjoint equation:
\begin{equation}
\label{fo:adjoint}
-dY_t = e^{-\beta t}\left\langle id,\beta_t - \beta \mu_t\right\rangle dt - Z_t dW_t.
\end{equation}
Since the derivative of the Hamiltonian with respect to the state variable $L$ does not depend upon the control or the state $L$,
the adjoint process can be determined independently of the choice of the admissible control $(g_t)_{t\ge 0}$ and the associated state $(L_t)_{t\ge 0}$ solving (\ref{fo:statedynamics}). Given the explicit formulas we have derived for $\mu_t$, we obtain:

\begin{lemma}\label{BSDE_MM}
The solution to the adjoint Backward Stochastic Differential Equation (\ref{fo:adjoint}) is given by:
\begin{equation}
\label{fo:yoft}
Y_t  = \frac{e^{-\beta t }}{\beta + \rho}  \left\langle id,\beta_t - \beta \mu_t\right\rangle,
\end{equation}
and it verifies the growth condition (\ref{Z_growth}).
\end{lemma}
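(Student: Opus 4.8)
The plan is to treat (\ref{fo:adjoint}) as a linear infinite-horizon BSDE whose driver $e^{-\beta t}\left\langle id,\beta_t-\beta\mu_t\right\rangle$ involves neither $Y$, $Z$, nor the control, so that its solution is formally the conditional expectation $Y_t=\mathbb{E}\big[\int_t^\infty e^{-\beta s}\left\langle id,\beta_s-\beta\mu_s\right\rangle\,ds\,\big|\,\F_t\big]$. I would first compute this expectation explicitly, exploiting the fact that the relevant first moments are Ornstein--Uhlenbeck processes, and then verify the resulting candidate directly by It\^o's formula, reading off $Z_t$ along the way.

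To carry this out, set $X_t=\left\langle id,\beta_t-\beta\mu_t\right\rangle$. Since $\beta_t$ and $\mu_t$ are both non-negative measure-valued solutions of the SPDE (\ref{OU_SPDE}) constructed in (\ref{of:betat}) and (\ref{of:mut}), the first-moment identity (\ref{SPDE_moment1}), combined with mass conservation (\ref{SPDE_moment0}), gives by linearity
\begin{equation}
dX_t=-\rho X_t\,dt+\nu\left\langle 1,\beta_0-\beta\mu_0\right\rangle\,dW_t,
\end{equation}
so that $X$ is a scalar Ornstein--Uhlenbeck process mean-reverting to $0$. Hence $\mathbb{E}[X_s\,|\,\F_t]=e^{-\rho(s-t)}X_t$ for $s\ge t$, and substituting into the representation formula and integrating the deterministic exponentials,
\begin{equation}
Y_t=X_t\int_t^\infty e^{-\beta s}e^{-\rho(s-t)}\,ds=\frac{e^{-\beta t}}{\beta+\rho}X_t,
\end{equation}
which is exactly (\ref{fo:yoft}). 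The interchange of $\mathbb{E}[\,\cdot\,|\F_t]$ with $\int_t^\infty$ is justified by the moment bound $\mathbb{E}\left\langle|id|,v_t\right\rangle\le e^{\frac12 C_1 t}$ recorded for $v_t=\mu_t,\beta_t$, which renders the integrand absolutely integrable once $\beta$ is large enough.

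It then remains to verify the candidate. Applying It\^o's formula to $Y_t=\frac{e^{-\beta t}}{\beta+\rho}X_t$ and using the dynamics of $X_t$ yields $-dY_t=e^{-\beta t}X_t\,dt-Z_t\,dW_t$ with
\begin{equation}
Z_t=\frac{\nu\, e^{-\beta t}}{\beta+\rho}\left\langle 1,\beta_0-\beta\mu_0\right\rangle,
\end{equation}
confirming that $(Y,Z)$ solves the adjoint equation. Because the total masses $\left\langle 1,\mu_0\right\rangle$ and $\left\langle 1,\beta_0\right\rangle$ are conserved and of order one, $Z_t$ decays like $e^{-\beta t}$, whence $\mathbb{E}\int_0^\infty|Z_t|^2\,dt<\infty$ and the growth condition (\ref{Z_growth}) holds.

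The step I expect to be the genuine obstacle is not the computation but the well-posedness of the infinite-horizon adjoint equation: in the absence of an explicit terminal condition, (\ref{fo:adjoint}) could in principle admit several adapted solutions, so one must argue that (\ref{fo:yoft}) is \emph{the} solution. I would settle this by observing that the candidate satisfies the transversality decay $\lim_{t\to\infty}\mathbb{E}|Y_t|=0$ (again from the factor $e^{-\beta t}$ against the exponential moment estimates, valid for $\beta$ sufficiently large), which together with (\ref{Z_growth}) singles out the unique solution in the admissible class: the difference of two such solutions is a true martingale whose $L^1$ norm tends to $0$, hence vanishes identically.
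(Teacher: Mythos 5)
Your proposal is correct, and its core coincides with the paper's own proof: both verify the candidate (\ref{fo:yoft}) by It\^o's formula, using the fact that $\beta_t$ and $\beta\mu_t$ solve the SPDE (\ref{OU_SPDE}) so that, by (\ref{SPDE_moment1}), the first moment $X_t=\left\langle id,\beta_t-\beta\mu_t\right\rangle$ is a scalar Ornstein--Uhlenbeck process. Where you genuinely add to the paper is at the two ends of the argument. First, you \emph{derive} the candidate from the representation $Y_t=\mathbb{E}\big[\int_t^\infty e^{-\beta s}X_s\,ds\,\big|\,\F_t\big]$ together with $\mathbb{E}[X_s|\F_t]=e^{-\rho(s-t)}X_t$, whereas the paper merely says the formula is ``guessed'' from the finite-client system; your route makes the constant $1/(\beta+\rho)$ transparent. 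Second, you address uniqueness: the paper's lemma asserts ``the solution'' but its proof is pure verification, while your argument (the difference of two solutions satisfying (\ref{Z_growth}) is a true martingale whose $L^1$ norm tends to zero by the transversality decay, hence vanishes) pins down the solution within the natural admissible class; this is exactly the point that an infinite-horizon BSDE with no terminal condition leaves open, and it is the right way to close it. Two small remarks: your $Z_t=\frac{\nu e^{-\beta t}}{\beta+\rho}\left\langle 1,\beta_0-\beta\mu_0\right\rangle$ is actually more careful than the paper's $Z_t=e^{-\beta t}\nu/(\beta+\rho)$, which silently drops the constant mass factor $\left\langle 1,\beta_0-\beta\mu_0\right\rangle$ (harmless, since it is a finite constant); and when you conclude the growth condition from $\mathbb{E}\int_0^\infty|Z_t|^2dt<\infty$, note that (\ref{Z_growth}) carries the weight $e^{Ct}$, so you need $2\beta>C$ --- the same ``$\beta$ sufficiently large'' assumption the paper invokes implicitly.
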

\begin{proof}
The exact form (\ref{fo:yoft}) of the solution can be guessed by going back to the explicit system of finitely many Ornstein-Uhlenbeck processes and taking the limit. However, for the proof, we show that the process $(Y_t)_{t\ge 0}$ given by (\ref{fo:yoft}) is the solution by direct inspection, computing the It\^o differential of $Y_t$ defined by (\ref{fo:yoft}) and using the fact that the random measures $\beta_t$ and $\beta \mu_t$ also solve the SPDE (\ref{OU_SPDE}). Using (\ref{SPDE_moment1}), we get:
\begin{align*}
dY_t &=  - \beta Y_t dt - \rho Y_t dt +  e^{-\beta t } \frac{\nu}{\beta + \rho} dW_t \\
		 &=  - e^{-\beta t} \left\langle id,\beta_t - \beta \mu_t\right\rangle dt + e^{-\beta t } \frac{\nu}{\beta + \rho} dW_t
\end{align*}
and by the growth properties of the first moment, $\lim_{t\to\infty}Y_t = 0$. Moreover, $Z_t = e^{-\beta t} \nu/(\beta + \rho)$ clearly verifies the growth condition (\ref{Z_growth}).
\end{proof}
\vskip 2pt\noindent
The form of the modified Hamiltonian justifies the introduction of the quantity:
\begin{equation}
\label{fo:alpha*}
\alpha^*_t = \bar{\alpha}_t + e^{\beta t} Y_t = \left\langle id,\lambda_t + \frac{\beta_t - \beta \mu_t}{\beta + \rho}\right\rangle
\end{equation}
so that, if we compute the modified Hamiltonian along the path of the adjoint process we get:
$$
\tilde H(g)=- \langle \alpha_t^*\gamma_t' -\gamma_t, \mu_t \rangle  - \epsilon \langle f \circ g, \mu_t\rangle.
$$
$\alpha^*_t $ can be viewed as the \emph{shadow alpha} of the market maker.\footnote{This terminology is justified by the fact that, when the market maker \emph{does} have his own view on the market (that is, his own model for $dp_t$), we would obtain the same optimization problem replacing $\alpha^*_t$ by $\mathbb{E}\left[\left. \int_t^{\infty} e^{-\beta (s-t)} dp_s\right|\mathcal{F}^0_t\right]$.} The first term in the definition of $\alpha^*_t $ is the average belief for alpha under the weighted client measure $\lambda_t$, while the second term takes into account mismatches in the time-horizons of the clients. 
For each $t\ge 0$, we define the profitability function $m_t$ by:
\begin{equation}
\label{fo:m}
\alpha \hookrightarrow m_t(\alpha) = (\alpha - \alpha_t^*)[\mu_t([\alpha,\infty)) - 1_{(-\infty,0]}(\alpha)]. 
\end{equation}
For each $\alpha\in\R$, $(m_t(\alpha))_{t\ge 0}$ is a progressively measurable stochastic process and for each fixed $t\ge 0$, the function $\alpha\hookrightarrow m_t(\alpha)$ is almost surely continuous in $\alpha$, except for a possible jump $m_t(0^+)-m_t(0^-)$ at $\alpha=0$. $m_t$ is bounded and vanishes at the infinitives because of the integrability property (\ref{of:integrability}) of the solutions of the SPDE. 
The profitability function quantifies the expected profit for an order placed at time $t$ at the price level $\alpha$. Indeed, the absolute value $|\alpha - \alpha_t^*|$  is equal to the \emph{spread} the market maker expects to gain per filled order, and up to a possible change of sign, the term $ \mu_t([\alpha,\infty))- 1_{(-\infty,0]}(\alpha)$  is equal to the proportion of clients that will fill the order. If the arrivals of the agents of our model occurred  according to a Poisson process instead of simultaneously, this would be the \emph{filling probability} of the order. The respective contributions of these two terms are commonly parsed by practitioners.
Notice also that, in the degenerate case $\epsilon = 0$, the profitability function is the derivative of the Hamiltonian in the direction of the control.

We now identify the modified Hamiltonian in terms of the control $g$ without involving the anti-derivatives $\gamma'_t$ an $\gamma_t$. 

\begin{lemma}
For each $t\ge 0$, we have the identity
\begin{equation}
\label{of:profitability}
 \left\langle  (id- \alpha_t^*)\gamma_t' - \gamma_t, \mu_t\right\rangle = \left\langle m_t, \gamma_t'' \right\rangle .
\end{equation}
\end{lemma}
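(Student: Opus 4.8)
The plan is to write both $\gamma_t'$ and $\gamma_t$ as explicit linear functionals of the control measure $\gamma_t''$ and thereby reduce the identity to a single interchange in the order of integration. Recall that $\gamma_t''=g_t\mu_t$ with $\langle g_t,\mu_t\rangle=1$, so the definition $\gamma_t'(\alpha)=\langle g_t 1_{[0,\alpha]},\mu_t\rangle$, extended to $\alpha<0$ by the signed integral $\int_0^\alpha$ (equivalently $-\langle g_t 1_{[\alpha,0]},\mu_t\rangle$), reads
\[
\gamma_t'(\alpha)=\int_\R \kappa(\alpha,x)\,\gamma_t''(dx),\qquad \gamma_t(\alpha)=\int_\R K(\alpha,x)\,\gamma_t''(dx),
\]
where $\kappa(\alpha,x)=1_{\{0\le x\le\alpha\}}-1_{\{\alpha\le x\le 0\}}$ and $K(\alpha,x)=\int_0^\alpha\kappa(s,x)\,ds$. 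A short computation will give $K(\alpha,x)=(\alpha-x)^+$ for $x>0$ and $K(\alpha,x)=(x-\alpha)^+$ for $x<0$, so that $|\kappa|\le 1$ and $0\le K(\alpha,x)\le|\alpha|$. In particular the left-hand side of (\ref{of:profitability}) is linear in $\gamma_t''$, exactly as is $\langle m_t,\gamma_t''\rangle$, so it suffices to show that the two linear functionals of $\gamma_t''$ coincide.

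Substituting these representations into $\langle (id-\alpha_t^*)\gamma_t'-\gamma_t,\mu_t\rangle$ and applying Fubini to exchange the $\mu_t$-integration with the $\gamma_t''$-integration, the identity reduces to verifying, for $\gamma_t''$-almost every $x$, that
\[
\int_\R\bigl[(\alpha-\alpha_t^*)\kappa(\alpha,x)-K(\alpha,x)\bigr]\,\mu_t(d\alpha)=m_t(x).
\]
I would compute the inner integral by cases. For $x>0$ the integrand is supported on $\{\alpha\ge x\}$, where it equals $(\alpha-\alpha_t^*)-(\alpha-x)=x-\alpha_t^*$, so the integral telescopes to $(x-\alpha_t^*)\mu_t([x,\infty))=m_t(x)$. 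For $x<0$ the support is $\{\alpha\le x\}$ and an analogous cancellation yields $(\alpha_t^*-x)\mu_t((-\infty,x])=(x-\alpha_t^*)[\mu_t([x,\infty))-1]$, which is again $m_t(x)$ since $1_{(-\infty,0]}(x)=1$. Because Proposition \ref{pr:spde} gives $\mu_t$ an $L^2$ density, $\mu_t$ is atomless, hence so is $\gamma_t''\ll\mu_t$; thus $\gamma_t''(\{0\})=0$, the jump of $m_t$ at the origin is irrelevant to $\langle m_t,\gamma_t''\rangle$, and the values of the kernels at the endpoints do not matter.

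The only genuine points to check are the integrability justifying Fubini and the bookkeeping across $\alpha=0$. For the former, $\int|\alpha-\alpha_t^*|\,\mu_t(d\alpha)<\infty$ by the moment bound (\ref{of:integrability}); since $|\kappa|\le1$ and $K(\alpha,x)\le|\alpha|$ while $\gamma_t''$ is a probability measure, the double integrals of the absolute values are bounded by a finite multiple of $\langle |id|,\mu_t\rangle+|\alpha_t^*|$, which is finite, so Tonelli's theorem legitimizes the interchange. The main obstacle is therefore purely one of careful bookkeeping: getting the signed convention for $\gamma_t'$ on $\{\alpha<0\}$ right, tracking the two kernels $\kappa$ and $K$ through the cancellation, and invoking atomlessness of $\mu_t$ to dispose of the discontinuity of $m_t$ at the origin. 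Once these are handled the identity follows directly.
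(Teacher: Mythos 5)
Your proof is correct, but it takes a genuinely different route from the paper's. The paper starts from the right-hand side $\left\langle m_t,\gamma_t''\right\rangle$ and performs two successive integrations by parts in the sense of distributions: differentiating the factor $(id-\alpha_t^*)\bigl(\mu_t([\,\cdot\,,\infty))-1_{(-\infty,0]}\bigr)$ produces the density $\mu_t$ plus a Dirac mass $\delta_0$ coming from the indicator, the boundary terms are disposed of using the decay of $\mu_t([\alpha,\infty))$ and of $\alpha\mu_t(\alpha)$ at infinity, and the $\delta_0$ contributions are finally killed by the normalizations $\gamma_t(0)=\gamma_t'(0)=0$. You instead expand $\gamma_t'$ and $\gamma_t$ as integrals of explicit kernels $\kappa$ and $K$ against $\gamma_t''$, apply Tonelli--Fubini once, and verify a pointwise identity in $x$ by cases. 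The two arguments are the same computation at heart (integration by parts is Fubini in disguise), but yours buys several things: it never manipulates distributional derivatives or Dirac masses; it makes the integrability needed for the interchange explicit, via $|\kappa|\le 1$, $K(\alpha,x)\le|\alpha|$ and the moment bound (\ref{of:integrability}); and it isolates the only delicate point --- the open-versus-closed interval conventions in $m_t$ and its jump at the origin --- into the single observation that $\mu_t$, hence $\gamma_t''\ll\mu_t$, is atomless. Note that your case $x<0$ genuinely needs this atomlessness: your inner integral is $(\alpha_t^*-x)\,\mu_t((-\infty,x])$ while $m_t(x)=(\alpha_t^*-x)\,\mu_t((-\infty,x))$, and these agree only because $\mu_t(\{x\})=0$; the paper's proof relies on the $L^2$ density of $\mu_t$ for the same purpose, just less visibly (it writes $\mu_t$ as a density when controlling boundary terms). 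In short, the paper's version is more compact, while yours is more elementary, self-contained, and makes the analytic hypotheses it consumes easier to audit.
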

\begin{proof}
Successive integrations by parts and simplifications yield:
\begin{align*}
\left\langle \gamma_t'', (id - \alpha_t^*)\left(\mu_t([\,\cdot\,,\infty)) - 1_{(-\infty,0]}(\,\cdot\,)\right) \right\rangle 
&= -\left\langle \gamma_t', \left(\mu_t([\,\cdot\, ,\infty)) - 1_{(-\infty,0]}(\,\cdot\,)\right) - (id - \alpha^*) \mu_t  - \alpha_t^* \delta_0 \right\rangle \\
&= \left\langle \gamma_t, \mu_t - \delta_0 \right\rangle + \left\langle \gamma_t' (id-\alpha^*), \mu_t\right\rangle  + \alpha_t^* \gamma'(0) \\
&=  \left\langle (id - \alpha_t^*) \gamma_t' - \gamma_t, \mu_t \right\rangle
\end{align*}
where the first integration by parts is justified by the fact that $\gamma'$ is bounded, the Dirac distribution has compact support and the simple facts:
$$
 \lim_{\alpha \rightarrow \infty} \mu_t([\alpha,\infty))= 0,\quad\text{and}\quad
 \lim_{\alpha \rightarrow -\infty} \mu_t([\alpha,\infty)) - 1_{\alpha\leq 0} = - \lim_{\alpha \rightarrow -\infty} \mu_t((-\infty,\alpha]) = 0,
$$ 
and $\alpha \mu(\alpha)$ vanishes at the infinities. The linear growth of $\gamma$ justifies the second integration by parts. The last line uses the initial conditions of $\gamma_t$ and $\gamma_t'$.
\end{proof}

We can now state and prove the main result of this section:
\begin{theorem}
The solution to the market maker's control problem is given by
\begin{equation}
\label{liquidity}
g_t(\alpha) = \frac{e^{ m_t(\alpha)/\epsilon}}{\int e^{ m_t(\alpha)\epsilon}d\mu_t(\alpha)}
\end{equation}
\end{theorem}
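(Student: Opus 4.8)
The plan is to invoke the stochastic maximum principle proved in the appendix, which reduces the search for an optimal control to the pointwise (in $t$ and $\omega$) maximization of the Hamiltonian along the adjoint path. Since Lemma \ref{BSDE_MM} already supplies the adjoint process $Y_t$ solving (\ref{fo:adjoint}) together with the requisite growth condition, and since the decomposition (\ref{fo:hamiltonian'}) isolates the control-independent term $e^{-\beta t} L \langle id, \beta_t - \beta\mu_t\rangle$, the whole problem collapses to maximizing the modified Hamiltonian $\tilde H(g)$ over admissible densities $g$.

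First I would use the identity (\ref{of:profitability}) from the preceding lemma to eliminate the anti-derivatives $\gamma_t'$ and $\gamma_t$ in favor of $g$ alone. Recalling that $\gamma_t''=g_t\,\mu_t$, so that $\langle m_t,\gamma_t''\rangle=\langle m_t\,g_t,\mu_t\rangle$ and $\langle f\circ g_t,\mu_t\rangle=\int g_t\log g_t\,d\mu_t$, the modified Hamiltonian evaluated along the adjoint path becomes the purely static functional
\[
\tilde H(g)=\int\bigl(m_t(\alpha)\,g(\alpha)-\epsilon\, g(\alpha)\log g(\alpha)\bigr)\,d\mu_t(\alpha),
\]
to be maximized subject to the admissibility constraint $\langle g,\mu_t\rangle=\int g\,d\mu_t=1$ and $g\ge 0$.

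Next I would solve this constrained concave program for each fixed $t$ and $\omega$. The integrand $x\mapsto m_t(\alpha)\,x-\epsilon\, x\log x$ is strictly concave in $x>0$, so the functional is strictly concave and admits at most one maximizer. Introducing a single Lagrange multiplier $\eta$ for the mass constraint and differentiating the integrand pointwise yields the first-order condition $m_t(\alpha)-\epsilon(\log g(\alpha)+1)-\eta=0$, hence $g(\alpha)=C\exp(m_t(\alpha)/\epsilon)$ with $C$ independent of $\alpha$. Imposing $\int g\,d\mu_t=1$ fixes $C=\bigl(\int e^{m_t(\alpha)/\epsilon}\,d\mu_t(\alpha)\bigr)^{-1}$, which is exactly (\ref{liquidity}).

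It remains to confirm that this candidate is genuinely admissible and genuinely optimal. Admissibility is almost immediate: the formula produces a nonnegative density normalizing to one against $\mu_t$, and its progressive measurability follows from that of $m_t$; because $m_t$ is bounded, as noted after (\ref{fo:m}), the normalizing integral is finite and bounded away from zero almost surely, so $g_t$ is well defined and lies in $\mathcal A$. For optimality I would verify the sufficiency half of the maximum principle: the Hamiltonian (\ref{fo:hamiltonian'}) is linear, hence concave, in $L$, while $\tilde H$ is concave in $g$ (the first term is linear in $g$ through the linear maps $g\mapsto\gamma_t'$ and $g\mapsto\gamma_t$, and $-\epsilon\langle f\circ g,\mu_t\rangle$ is concave by convexity of $f$); separability then gives joint concavity in $(L,g)$, so the stationary point found above is the global maximizer. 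The main technical care goes not into the optimization itself---entropy maximization under a single linear constraint is routine and produces the expected Gibbs form---but into checking that the pointwise Lagrange argument delivers an element of $\mathcal A$ and that the integrability property (\ref{of:integrability}) of $\mu_t$ together with the boundedness of $m_t$ keep every expression finite; these are precisely the hypotheses secured by Proposition \ref{pr:spde} and the remarks following (\ref{fo:m}).
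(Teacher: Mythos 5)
Your proposal is correct, and it shares the paper's core reduction: both arguments use Lemma \ref{BSDE_MM} together with the identity (\ref{of:profitability}) to collapse the control problem into maximizing, for each fixed $t$ and $\omega$, the functional $\tilde H(g)=\langle m_t g,\mu_t\rangle-\epsilon\langle f\circ g,\mu_t\rangle$ over $g\ge 0$ with $\langle g,\mu_t\rangle=1$. Where you genuinely diverge is in how optimality of the Gibbs density is established. The paper first proves that a maximizer \emph{exists} by an abstract compactness argument --- a maximizing sequence, Komlos's lemma to extract $\mu_t$-a.e.\ convergent Ces\`aro means, and De la Vall\'ee-Poussin's criterion (fed by the superlinearity of $f$ and the entropy bound $\epsilon\langle f\circ g_n,\mu_t\rangle\le -\tilde M+1+\|m_t\|_\infty$) to ensure the limit still has mass one --- and only afterwards characterizes that maximizer via the Lagrangian first-order condition (\ref{variational}). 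You instead construct the candidate directly from the first-order condition and conclude by concavity, bypassing the existence step entirely. That shortcut is legitimate, but to make it airtight you should replace the phrase ``the stationary point is the global maximizer'' by one of two explicit one-line verifications: either the Lagrangian sandwich --- since $x\mapsto(m_t(\alpha)-\eta)x-\epsilon x\log x$ is concave and your $g_t$ maximizes it pointwise in $\alpha$, every admissible $g$ satisfies $\tilde H(g)=\tilde L(g)+\eta\le \tilde L(g_t)+\eta=\tilde H(g_t)$ --- or the direct computation $\tilde H(g_t)-\tilde H(g)=\epsilon\int g\log(g/g_t)\,d\mu_t=\epsilon H(g\mu_t\,|\,g_t\mu_t)\ge 0$, which is the Gibbs variational principle; in infinite dimensions ``stationarity of a concave program'' is not by itself a proof. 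The trade-off between the two routes is worth noting: yours is shorter and self-contained precisely because the entropic penalty yields a closed-form candidate to verify against; the paper's compactness machinery is what survives for a general strictly convex $f$ (general $f$-divergence penalties, as the paper claims its results do), where an explicit candidate, and hence your direct verification, may be unavailable. Incidentally, both your derivation and the paper's produce $e^{m_t(\alpha)/\epsilon}$ inside the normalizing integral; the exponent $m_t(\alpha)\epsilon$ in the theorem's displayed formula (\ref{liquidity}) is a typo.
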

The quantity in the right hand side of \label{liquidity} can be seen as the change of measure from the distribution of the clients alphas to the order book of the market maker.
\begin{proof}
Using Lemma \ref{BSDE_MM}  and equation (\ref{of:profitability}) enables us to rewrite the modified Hamiltonian as:
\begin{equation}
\tilde H(g)= \left\langle m_t g, \mu_t \right\rangle - \epsilon \left\langle f \circ g, \mu_t \right\rangle 
\end{equation}
which we need to maximize for each fixed $t\ge 0$, over $g\in L^1(\mu_t)$ with $\left\langle g, \mu_t\right\rangle =1$ and $g\geq 0$. Again, for each fixed $t$ and almost surely, by convexity of $f$, $\tilde H$ is concave in $g$ and bounded from above, so that $\tilde{M} = \sup_{A_t} \tilde H(g)$ exists and is finite. 

Let $(g_n)_{n\ge 1}$ be a maximizing sequence in $A_t$, i.e. a sequence of admissible $g_n$ such that $H(g_n) \rightarrow \tilde{M}$. By Komlos's lemma (see for example \cite{DelbaenSchachermayer}), there exists a subsequence $g_{\phi(n)}$ and an element $g_\infty \in L^1(\mu_t)$ such that $\frac{1}{n} \sum_{i=1}^n g_{\phi(\psi(i))} \rightarrow g_\infty$ $\mu_t$-a.e. as $n \rightarrow\infty$ for any subsequence of the original subsequence. Clearly, $g_\infty \geq 0$ $\mu_t$ almost everywhere. 
To prove $\left\langle g_\infty, \mu_t\right\rangle = 1$ we show uniform integrability of  $\{g_n\}_n$ using De la Vall\'ee-Poussin's theorem (see \cite{Meyer}). Indeed, by definition of the maximizing sequence, $H(g_n) \geq \tilde{M}-1$ for large enough $n$. Hence $\epsilon \left\langle f \circ g_n, \mu_t \right\rangle \leq - \tilde{M} + 1 + ||m_t||_\infty$. $f$ is non-negative, convex, increasing on $[1,\infty)$ and verifies $\lim_{x\rightarrow\infty} \frac{1}{x}f(x) =\infty$. This concludes the proof of the admissibility of $g_\infty$ as a control. Finally, by concavity of the modified Hamiltonian,  the supremum is attained at $g_\infty$. 

We now characterize the maximal element we just constructed. We introduce a Lagrange multiplier $\eta \in \R$ to relax the problem to the set of  $g \in L^1(\mu_t)$ such that $g \geq 0$, so that the Lagrangian reads:
\begin{equation}
\tilde L(g)=\left\langle (m_t - \eta) g - \epsilon f \circ g , \mu_t \right\rangle 
\end{equation}
Classical variational calculus yields the optimality condition
\begin{equation}
g =  (f')^{-1}\left(\epsilon^{-1}\left(m_t - \eta\right)\right),\label{variational}
\end{equation}
for some $\eta$. The existence and uniqueness of a Lagrange multiplier renormalizing $g$ is obvious given the explicit formula for $(f')^{-1}$. We conclude that $g$ is the desired optimum.
\end{proof}

\subsection{Interpretation}

\begin{figure}[htbp]
	\centering
		\includegraphics[width=1.00\textwidth]{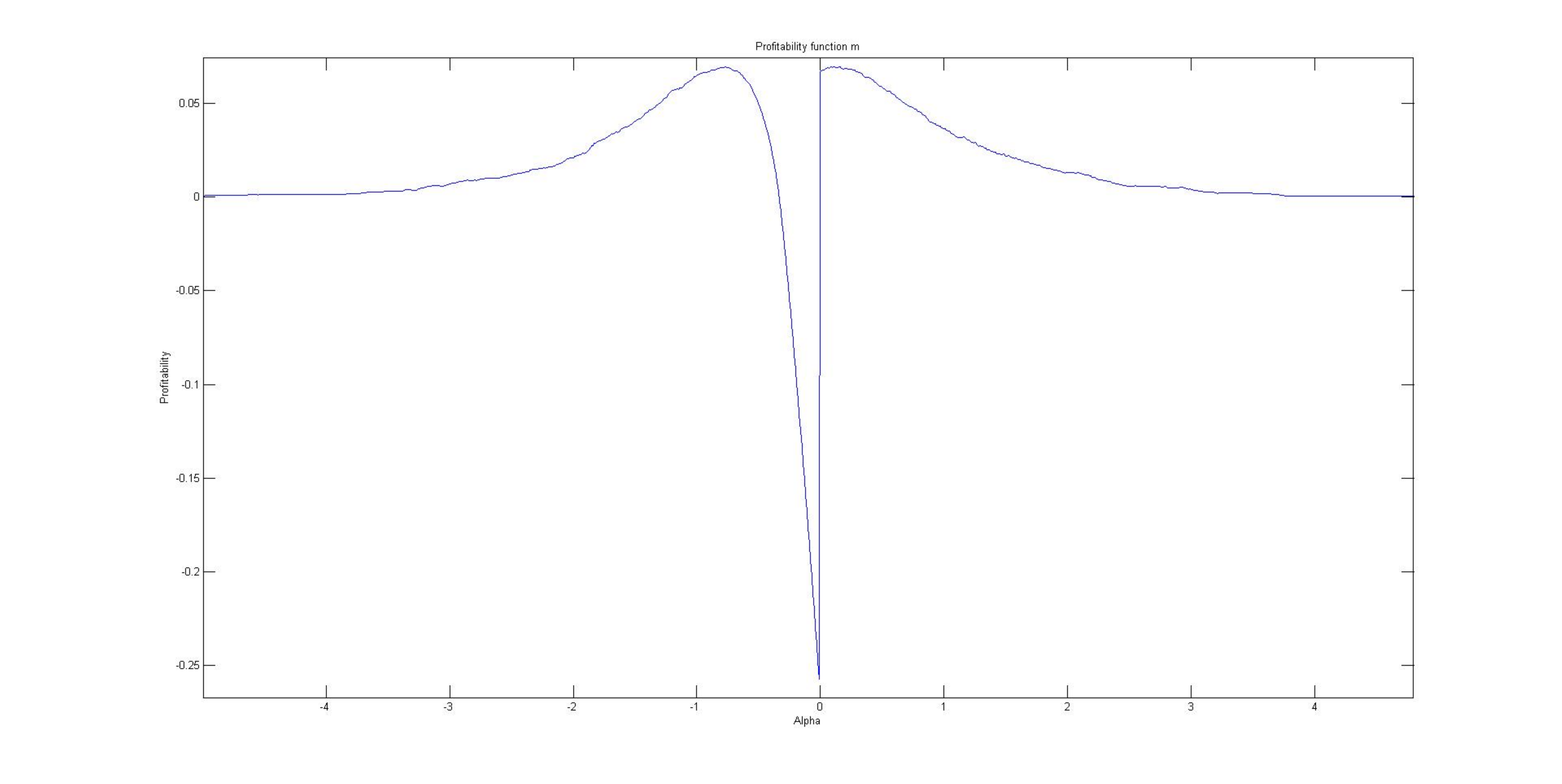}
	\caption{Graph of a simulated profitability curve $m$ with $\alpha^* \approx -0.33$. The maximum profitability is only $5\%$ of the volatility used to simulate prices. The average spread being comparable to the volatility (see \cite{Bouchaud2}), this means that the market maker's margins are very thin in this simulation.}
	\label{fig:profitability}
\end{figure}

\begin{figure}[htbp]
	\centering
		\includegraphics[width=1.00\textwidth]{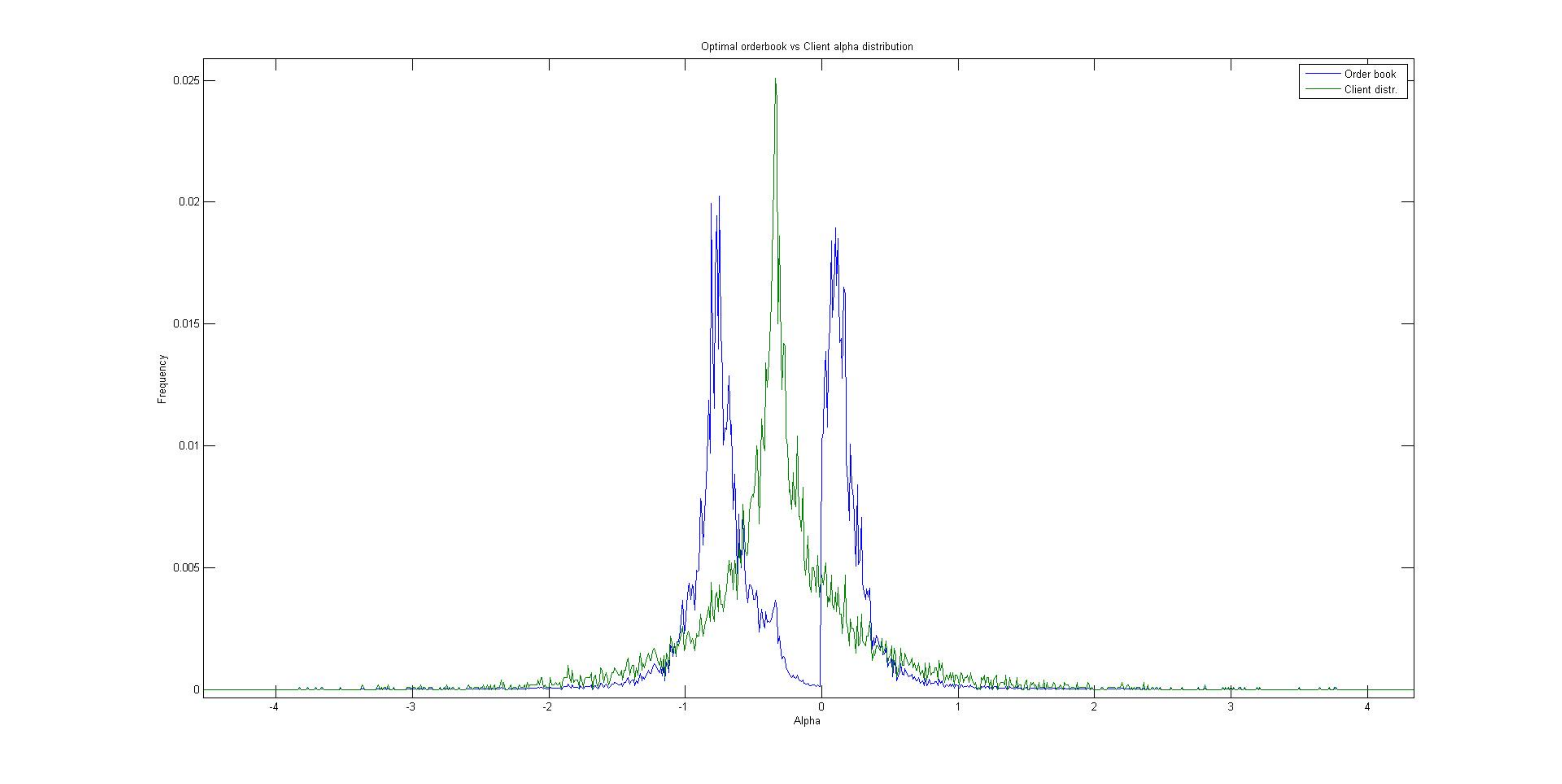}
	\caption{Bimodal distribution: Optimal order book $\gamma''$ for $\epsilon = 0.01$ using the simulated alpha distribution and the entropic penalizing factor. Unimodal distribution: Client alpha distribution. The profitability used is the same as in figure \ref{fig:profitability}.}
	\label{fig:orderbook}
\end{figure}

Figure \ref{fig:profitability} and Figure \ref{fig:orderbook} illustrate what the market maker does:
\begin{enumerate}
\item He tries to stick to a shape not too far away from $\mu_t$, his client alpha distribution. This is to avoid feedback effects and associated errors on the price estimation.
\item He also takes into account the profitability function $m_t$, which leads him ``to make a big hole'' in the center of the distribution $\mu_t$.
\end{enumerate}
The combined effects lead to the familiar ``double hump'' shape of the order book, as seen in \cite{Bouchaud2}. Other consequences of the liquidity formula are
\begin{enumerate}
\item In the limit where one client is perfectly intelligent of the price ($\epsilon \rightarrow 0$), the market maker places a Dirac mass on the order book. In particular, he only trades on profitable sections of the book.
\item In the noise trade limit ($\epsilon \rightarrow \infty$), the market maker simply reproduces the client alpha distribution.
\item If the return distribution of the asset is mean-reverting (which is not the case for options with maturities, for example), then so will $\mu_t$ and hence $\gamma_t$ and $c_t$. In the case of a European option, $\gamma''_t$ converges to a Dirac at the payoff at maturity.
\end{enumerate}

\section{Conclusion}

In this paper, we propose an equilibrium model for high frequency market making. We show that each client chooses his level of trade by an expected discount of future prices according to his beliefs. We call this the implied alpha of the client. This relationship can be used as a codebook to link trade and price dynamics. The different clients can be differentiated according to the time scales of their implied alphas and the variances of the errors they make estimating the price. This leads to an expression for the latter that the market maker can use to avoid having his own view on the market. He can then construct a profitability curve which dictates which sections of the order book are the most profitable. In solving the market maker optimization problem, we use a penalizing term to smooth the objective function and capture possible feedback effects. All these result in a tractable framework in which we can solve the market maker's control problem and identify the equilibrium order book dynamics.

\appendix

\section{A convenient form of the  stochastic maximum principle}
We present a form of the Pontryagin stochastic maximum principle tailored to the needs of the analysis of Section \ref{se:mm}.
The set-up is quite general, following loosely \cite{Pham} Chapter 3. We assume that $(\Omega,\F, \left(\F_t\right)_{t \geq 0}, \P)$ is a filtered probability space, $\left(\F_t\right)_{t \geq 0}$ is generated by a Wiener process $(W_t)_{t \geq 0}$, and we let $A_t(\omega)$ be such that for all $(t, \omega)$, $A_t$ is a Borel convex subset of a Polish topological vector space $E$, adapted to $\left(\F_t\right)_{t \geq 0}$. The admissible control processes $(\alpha_t)_{t\ge 0}$ are the progressively measurable processes in $E$ such that  $\alpha_t\in A_t$ for all $t\ge 0$. We also assume that the dynamics of the controlled state $X$ are governed by an Ordinary Differential Equation(ODE) with random coefficients valued in $\mathbb{R}^n$:
\begin{equation}
dX_t = f_t(X_t, \alpha_t) dt
\end{equation}
where the coefficient $f: (\omega, t, X, a) \rightarrow f_t(X, a)$ is Lipschitz in $X$ uniformly in all the other variables. Assume an estimate of the form
\begin{equation}
\left|X_t\right| \leq |X_0| e^{C t} \label{growth}
\end{equation}
where $C$ is a constant. We also assume that $f$ is continuously differentiable (i.e. $C^1$) in $X$.
The controller's objective function is given by
\begin{equation}
J(\alpha) =\E\left[\int_0^\tau j_t(X_t, \alpha_t) dt + g_\tau( X_\tau)\right]
\end{equation}
where $\tau$ is a stopping time adapted to $\left(\F_t\right)_{t \geq 0}$ and $j$ and $g$ are random real-valued concave functions which are $C^1$ in $X$. Furthermore, we assume that $j$, $g$ and $\tau$ are such that $E\left[|g_\tau( X_\tau)|\right]$ and $\E\left[\int_0^\tau |j_t(X_t, \alpha_t)| dt\right]$ are finite and uniformly bounded from above.
Next, we define the Hamiltonian
\begin{equation}
\mathcal{H}_t(\omega, X, a, Y) = f_t(\omega, X, a) Y + j_t(\omega, X, a)
\end{equation}
for $Y\in\R^n$ and  for each admissible control $(\alpha_t)_{t\ge 0}$, the adjoint equation:
\begin{equation}
-dY_t = \partial_X \mathcal{H}_t(X_t, \alpha_t, Y_t) dt - Z_t dW_t\label{adjoint}
\end{equation}
with $Y_\tau = \partial_X g(X_\tau)$ and $Z_t$ such that $\E\int_0^\tau |Z_t|^2 dt < \infty$. Under these conditions we have the following result.

\begin{theorem}[Pontryagin's maximum principle]
Let $\hat{\alpha}$ be an admissible control and $\hat{X}$ be the associated state variable. Suppose there exists a solution $(Y,Z)$ to the adjoint equation (\ref{adjoint}) such that 
\begin{equation}
\E\left[ \int_0^\tau e^{C t} |Z_t|^2 dt\right] < \infty, \label{Z_growth}
\end{equation}
and the Hamitonian verifies for every$t\ge 0$
\begin{equation}
\mathcal{H}_t\left(\hat{X}_t,\hat{\alpha}_t,Y_t\right) = max_{a \in A_t} \mathcal{H}_t\left(\hat{X}_t,a,Y_t\right),\qquad a.s.
\end{equation}
and for every $t\ge 0$, almost surely, the function
$$
(X,\alpha)\hookrightarrow \mathcal{H}_t\left(\hat{X}_t,\hat{\alpha}_t,Y_t\right) 
$$
is concave, then $\hat{\alpha}$ is an optimal control, that is, $J(\alpha)\leq J(\hat{\alpha})$ for all admissible $\alpha=(\alpha_t)_{t\ge 0}$.
\end{theorem}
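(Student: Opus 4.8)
The plan is to fix an arbitrary admissible control $\alpha=(\alpha_t)_{t\ge 0}$ with associated state $X$ and to show directly that $J(\hat\alpha)-J(\alpha)\ge 0$. First I would split the difference into a running part and a terminal part,
\begin{equation*}
J(\hat\alpha)-J(\alpha)=\E\left[\int_0^\tau\bigl(j_t(\hat X_t,\hat\alpha_t)-j_t(X_t,\alpha_t)\bigr)dt\right]+\E\left[g_\tau(\hat X_\tau)-g_\tau(X_\tau)\right],
\end{equation*}
and immediately linearize the terminal term. Since $g_\tau$ is concave and $C^1$ in $X$ with $Y_\tau=\partial_X g_\tau(\hat X_\tau)$, the supporting-hyperplane inequality gives $g_\tau(\hat X_\tau)-g_\tau(X_\tau)\ge Y_\tau(\hat X_\tau-X_\tau)$. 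The point of the argument is then to convert $\E[Y_\tau(\hat X_\tau-X_\tau)]$ into a time integral that combines with the running term.

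Next I would apply the It\^o product rule to $Y_t(\hat X_t-X_t)$. Because both states solve the \emph{ordinary} differential equation $dX_t=f_t(X_t,\alpha_t)dt$, the difference $\hat X_t-X_t$ has finite variation and there is no quadratic-covariation term, so, using the adjoint equation (\ref{adjoint}),
\begin{equation*}
d\bigl[Y_t(\hat X_t-X_t)\bigr]=Y_t\bigl(f_t(\hat X_t,\hat\alpha_t)-f_t(X_t,\alpha_t)\bigr)dt-(\hat X_t-X_t)\partial_X\mathcal H_t(\hat X_t,\hat\alpha_t,Y_t)dt+(\hat X_t-X_t)Z_t\,dW_t.
\end{equation*}
Integrating from $0$ to $\tau$, using $\hat X_0-X_0=0$, and taking expectations, the key step is to argue that the stochastic integral $\int_0^\tau(\hat X_t-X_t)Z_t\,dW_t$ has zero expectation. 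This is exactly where the tailored growth hypotheses pay off: the state estimate (\ref{growth}) controls $|\hat X_t-X_t|$ by $e^{Ct}$, and the reinforced condition (\ref{Z_growth}) on $Z$ is designed precisely so that $\E\int_0^\tau e^{Ct}|Z_t|^2dt<\infty$, whence after a routine localization $\tau\wedge n\uparrow\tau$ and a limiting argument the stochastic integral is a true martingale with vanishing mean. I expect this integrability bookkeeping, rather than any algebra, to be the main obstacle, since $\tau$ is only a stopping time and the horizon may be infinite.

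Finally I would substitute $j_t=\mathcal H_t-f_tY_t$ into the running term; the two drift contributions $Y_t(f_t(\hat X_t,\hat\alpha_t)-f_t(X_t,\alpha_t))$ cancel against the identical term produced by the product rule, leaving
\begin{equation*}
J(\hat\alpha)-J(\alpha)\ge\E\left[\int_0^\tau\Bigl(\mathcal H_t(\hat X_t,\hat\alpha_t,Y_t)-\mathcal H_t(X_t,\alpha_t,Y_t)-(\hat X_t-X_t)\partial_X\mathcal H_t(\hat X_t,\hat\alpha_t,Y_t)\Bigr)dt\right].
\end{equation*}
It then suffices to show the integrand is almost surely nonnegative for each $t$. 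The maximum condition gives $\mathcal H_t(\hat X_t,\hat\alpha_t,Y_t)\ge\mathcal H_t(\hat X_t,\alpha_t,Y_t)$, while joint concavity of $(X,a)\mapsto\mathcal H_t(X,a,Y_t)$ furnishes a supporting hyperplane at $(\hat X_t,\hat\alpha_t)$; combining the two yields nonnegativity. A point deserving care is that the control space $E$ is only a Polish vector space and $\mathcal H_t$ need not be differentiable in $a$, so rather than differentiate in the control direction I would invoke the maximum condition over the convex set $A_t$ together with the supergradient form of concavity, selecting an $a$-supergradient lying in the normal cone of $A_t$ at $\hat\alpha_t$ so that its pairing with $\hat\alpha_t-\alpha_t$ is nonnegative. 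This closes the argument.
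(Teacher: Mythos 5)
Your proposal is correct and follows essentially the same route as the paper's own proof: linearize the terminal cost by concavity of $g_\tau$ with $Y_\tau=\partial_X g_\tau(\hat X_\tau)$, apply the It\^o product rule to $Y_t(\hat X_t-X_t)$ (no covariation term since the states have finite variation), show the stochastic integral has zero mean by combining the state estimate (\ref{growth}) with the reinforced condition (\ref{Z_growth}), substitute $j_t=\mathcal{H}_t-f_t\,Y_t$, and conclude by concavity together with the maximum condition. The one place you diverge is the final step, and there you are in fact more careful than the paper: its inequality (\ref{P5}) invokes the maximum condition at the point $X_t$, where it was never assumed to hold (it is only assumed at $\hat X_t$), whereas your argument---joint concavity, a supergradient whose $X$-component is $\partial_X\mathcal{H}_t(\hat X_t,\hat\alpha_t,Y_t)$, and the normal-cone form of the first-order condition for $\hat\alpha_t$ over the convex set $A_t$---is the standard rigorous way to close exactly that gap, modulo the constraint-qualification needed in infinite dimensions to guarantee such a supergradient exists, which you rightly flag as the delicate point.
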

\begin{proof}
If $(X_t)_{t\ge 0}$ is the state associated to another admissible control $(\alpha_t)_{t\ge 0}$, we have the chain of relationships:
\begin{eqnarray}
&&\E\left[g_\tau(X_\tau) - g_\tau(\hat{X}_\tau)\right]\notag \\
&&\phantom{?}\leq \E\left[\left(X_\tau - \hat{X}_\tau\right) \cdot Y_\tau \right] \label{P1} \\ 
 &&\phantom{?}= \E\left[ \int_0^\tau \left( - \left(X_t - \hat{X}_t\right) \cdot \partial_X \mathcal{H}_t(\hat{X}_t, \hat{\alpha}_t, Y_t) + \left(f_t(X_t, \alpha_t) - f_t(\hat{X}_t, \hat{\alpha}_t)\right)\cdot Y_t \right)dt\right] \label{P2} \\ 
&&\phantom{?}\leq \E\left[\int_0^\tau \left(\mathcal{H}_t(\hat{X}_t,\hat{\alpha}_t,Y_t) - \mathcal{H}_t(X_t,\hat{\alpha}_t,Y_t) + \left(f_t(X_t, \alpha_t)- f_t(\hat{X}_t, \hat{\alpha}_t)\right)\cdot Y_t \right)dt  \right] \label{P3} \\ 
&&\phantom{?}= \E\left[\int_0^\tau \left(\mathcal{H}_t(\hat{X}_t,\hat{\alpha}_t,Y_t) - \mathcal{H}_t(X_t,\hat{\alpha}_t,Y_t) - \mathcal{H}_t(\hat{X}_t,\hat{\alpha}_t,Y_t) + \mathcal{H}_t(X_t,\alpha_t,Y_t)\right)dt\right] \notag\\ 
&&\phantom{?????????}- \E\left[\int_0^\tau \left( j_t(X_t,\alpha_t)- j_t(\hat{X}_t, \hat{\alpha}_t)\right)dt  \right] \label{P4} \\ 
&&\phantom{?}\leq - \E\left[\int_0^\tau j_t(X_t,\alpha_t) - j_t(\hat{X}_t, \hat{\alpha}_t)\right]. \label{P5} 
\end{eqnarray}
(\ref{P1}) stems from the concavity of $g$ and terminal condition of $Y$. (\ref{P2}) follows from the dynamics of $Y$ and the relationship:
\begin{equation}
\E\left[\int_0^\tau \left(X_t-\hat{X}_t\right)^2 Z_t^2 dt\right] \leq 2 X_0 \E\left[ \int_0^\tau e^{C t} Z_t^2 dt\right] < \infty
\end{equation}
which guarantees that the local martingale part is a martingale. (\ref{P3}) holds by the concavity of the Hamiltonian. (\ref{P4}) is just the definition of $\mathcal{H}$. Finally, (\ref{P5}) is a consequence of the fact that $\hat{\alpha}_t$ maximizes $\mathcal{H}_t(\hat{X}_t, \cdot, Y_t)$.
\end{proof}

\section{Derivation of the SPDE (\ref{OU_SPDE})}
We give the main steps of the proof of Proposition \ref{pr:spde}.
By independence of the common randomness $W$ and the idiosyncratic randomness $(M^i,\alpha_0^i,\epsilon^i)_{i\ge 1}$, we can freeze the randomness of $W$ and work after conditioning with respect to $\F^W$, the $\sigma$-algebra generated by $W$, without affecting the independence properties of the idiosyncratic random variables. By independence of the $M^i$, the $\alpha^i_0$ and $\epsilon^i$, we have that, conditional on $\F^W$, the random variables
\begin{equation}
\alpha^i_t = \alpha^i_0 e^{-\rho t} + \int_0^t e^{-\rho (t-s)} \left(\nu dW_s + \sigma dM^i_s\right)
\end{equation}
are iid and Gaussian. So if $f\in C^2$ is such that $f$ and its two derivatives have at most polynomial growth, given the assumption (\ref{of:m}) on the joint distribution $m$ of all the couples $(\alpha^i_0,\epsilon^i)$, we can apply the law of large numbers and get:
\begin{align*}
\lim_{n\rightarrow\infty} &\frac{1}{n} \sum_{i=1}^n \epsilon^i f(\alpha^i_t) = \E\left[\left.\epsilon^1 f(\alpha^1_t)\right|\F^W\right] \\
																																						&=  \int_{\R^3} \epsilon f\left(\alpha_0 e^{-\rho t} + \int_0^t e^{-\rho(t-s)} \nu dW_s + \sigma x \sqrt{\int_0^t e^{-2\rho(t-s)} ds} \right) \frac{1}{\sqrt{2\pi}} e^{-\frac{x^2}{2}}  m(d\alpha_0,d\epsilon) dx
\end{align*}
The other results can be derived directly from this explicit representation.

\bibliographystyle{plain}

\end{document}